\newcommand{\bmu}{\boldsymbol{\mu}}
\newcommand{\bphi}{\boldsymbol{\phi}}
\newcommand{\balpha}{\boldsymbol{\alpha}}
\newtheorem{Lemma}{Lemma}
\newtheorem{Proposition}{Proposition}
\newtheorem{Remark}{Remark}
\newtheorem{Example}{Example}
\newtheorem{Corollary}{Corollary}
\newtheorem{MR}{Main Result}
\title{On the View-and-Channel Aggregation Gain in Integrated Sensing and Edge AI}
\author{}
\newcommand{\removelatexerror}{\let\@latex@error\@gobble}
\begin{document}
 \author{{Xu~Chen},~{Khaled B. Letaief}, and~{Kaibin~Huang}

 \thanks{X. Chen and K. Huang are with the Department of Electrical and Electronic Engineering, The University of Hong Kong, Hong Kong. Khaled B. Letaief is with the Department of Electronic and Computer Engineering, Hong Kong University of Science and Technology, Hong Kong. Corresponding author: K. Huang (Email: huangkb@eee.hku.hk).}
}
\maketitle
% \begin{singlespace}
\begin{abstract}
Sensing and edge \emph{artificial intelligence} (AI) are two key features of the \emph{sixth-generation} (6G) mobile networks. Their natural integration, termed \emph{Integrated sensing and edge AI} (ISEA), is envisioned to automate wide-ranging \emph{Internet-of-Tings} (IoT) applications. To achieve a high sensing accuracy, features of multiple sensor views are uploaded to an edge server for aggregation and inference using a large-scale AI model. The view aggregation is realized efficiently using over-the-air computing (AirComp), which also aggregates channels to suppress channel noise. As ISEA is at its nascent stage, there still lacks an analytical framework for quantifying the fundamental performance gains from view-and-channel aggregation, which motivates this work. Our framework is based on a well-established distribution model of multi-view sensing data where the classic Gaussian-mixture model is modified by adding sub-spaces matrices to represent individual sensor observation perspectives. Based on the model and linear classification, we study the End-to-End sensing (inference) uncertainty, a popular measure of inference accuracy, of the said ISEA system by a novel, tractable approach involving designing a scaling-tight uncertainty surrogate function, global discriminant gain, distribution of receive Signal-to-Noise Ratio (SNR), and channel induced discriminant loss. As a result, we prove that the E2E sensing uncertainty diminishes at an \emph{exponential} rate as the number of views/sensors grows, where the rate is proportional to global discriminant gain. Given AirComp and channel distortion, we further show that the exponential scaling remains but the rate is reduced by a linear factor representing the channel induced discriminant loss. Furthermore, in the case of many spatial degrees of freedom, we benchmark AirComp against equally fast, traditional analog orthogonal access. The comparative performance analysis reveals a sensing-accuracy crossing point between the schemes corresponding to equal receive array size and sensor number. 
This leads to the proposal of a scheme for adaptive access-mode switching to enhance ISEA performance. Last, the insights from our framework are validated by experiments using a convolutional neural network model and real-world dataset. 
\end{abstract}

\begin{IEEEkeywords}
Integrated sensing and edge AI, Over-the-air computation, Multi-antenna communication
\end{IEEEkeywords}
\section{Introduction}
In June 2023, the International Telecommunication Union (ITU-R) finalized six major usage scenarios for 6G. While others represent a scaled-up version of 5G, two are new -- \emph{Integrated AI and Communications} (IAAC) and \emph{Integrated Sensing and Communication} (ISAC). IAAC reflects the 6G vision of edge intelligence, referring to ubiquitous distributed AI model training and inference at the network edge to support \emph{Internet-of-Tings} (IoT) applications~\cite{edgeAI_Letaief,WirelessmeetsML}. On the other hand, ISAC will leverage edge devices as distributed sensors and network-scale cooperation to enable 6G networks to have multi-view observations of the physical world in real-time~\cite{Sensing_in_6G, Letaief_CollaSensing2023}. The natural fusion of the two distinctive 6G functions, termed \emph{Integrated Sensing and Edge AI} (ISEA), shall provide a powerful platform for automating a broad range of IoT applications including auto-pilot, robotic control, digital twins, augmented reality, and localization and tracking~\cite{Letaief_CollaSensing2023,edgeAI_Letaief}. Unleashing the full potential of ISEA calls for a new goal-oriented design approach that integrates sensing, AI, and communication to optimize the end-to-end (E2E) performance~\cite{distributedAI,airpooling2023}. In this work, we contribute to the theoretic characterization of the E2E performance of ISEA, thereby laying a foundation for goal-oriented designs. 

A common backbone architecture for ISEA, called \emph{multi-view convolutional neural network} (MVCNN), wirelessly connects distributed sensors to an edge server~\cite{MVCNN}. Each sensor uses a lightweight neural network model for feature extraction from local sensing data and then uploads the local features for aggregation and inference at the server using a pre-trained deep neural network model supporting multi-modal computer vision~\cite{CV_for_locolization2022,Multiagent_perception,airpooling2023}. Local and server models are jointly trained as a single global model to maximize the E2E sensing (or inference) accuracy. This pertains to the common approach in edge AI, called split inference~\cite{DIB_JZhang}. By treating local and server models as components splitting the global model, relevant techniques can enable the adaptation of the splitting point to balance the device computation load and performance requirements in terms of, e.g., E2E latency and communication resources~\cite{Accelarate_edgeAI,SZhou_split_energy,Bennis_split_latency}. The mentioned feature aggregation, commonly referred to as multi-view pooling, is a key operation of MVCNN that exploits multiple sensor observations to improve sensing accuracy. The server operation fuses received local features into an aggregated feature map that is input into the global model (e.g., classifier) to generate a label identifying a target object/event. Element-wise averaging and maximization over local feature vectors are two popular aggregation functions termed average-pooling and max-pooling, respectively (see, e.g.,~\cite{airpooling2023}). Via view aggregation, multi-view sensing can attain an accuracy significantly higher than that of the single-view case especially when there are many sensors~\cite{MVCNN,MVsensing_graph}. However, the implementation of ISEA is confronted by a communication bottleneck resulting from the transmission of high-dimensional features by potentially a large cluster of sensors. 

Massive access techniques for 5G are insufficient for tackling the communication bottleneck of edge AI, which includes ISEA as a special case. Such techniques, for example, grant-free massive access, assume low-rate sporadic transmission by many low-complexity sensors monitoring environmental variables such as humidity and temperature~\cite{Larsson_grantfree,WeiYu_grantfree}. In contrast, 6G sensors are usually multi-modal devices (e.g., cameras and LIDAR) deployed in data-intensive computer vision applications such as surveillance, autonomous driving, and drone swarms~\cite{Chen2016Multiview}. The challenges are escalated by the tactile applications targeted by 6G, such as augmented reality and remote robotics, which demand air latency below 1 milli-second~\cite{3GPP_AItraffic}. The search for solutions motivates researchers to depart from the traditional communication-computing separation approach and advocate a paradigm shift towards the mentioned goal-oriented designs that target a specific task, such as distributed learning or sensing, and aim at maximizing the corresponding E2E system performance~\cite{edgeAI_Letaief, WirelessmeetsML}. One natural design approach for new paradigm is to customize existing techniques from multi-view sensing and edge AI, for example, sensor scheduling~\cite{When2com}, feature compression~\cite{DIB_JZhang}, and hierarchical pooling~\cite{feature_selection}, using an E2E metric (e.g., E2E sensing accuracy or latency) and targeting a specific air-interface technology (e.g., MIMO, OFDMA, and adaptive power control). An alternative, more revolutionary approach is to design new physical-layer technologies fully integrating computing and communication. In this vein, a representative class of techniques as considered in this work, called \emph{over-the-air computation} (AirComp), integrates multi-access and nomographic functional computation (e.g., averaging and geometric mean) to solve the scalability problem in traditional multi-access that divides radio resources~\cite{GXZhuAirComp2019}. AirComp's basic principle is to exploit the wave superposition property to achieve over-the-air aggregation of uncoded analog signals simultaneously transmitted by multiple devices. The scalability as a result of simultaneous access makes AirComp a popular air-interface technology for supporting fast and efficient distributed computing in 6G operations such as distributed learning~\cite{Air_distlearning}, inference~\cite{Air_edgeinference}, and sensing~\cite{airpooling2023,Task_oriented_DZ}. In addition, the use of uncoded analog transmission in AirComp is another factor contributing to the technology's ultra-low-latency while the resultant unreliability can be coped with by the robustness of data-analytics techniques or an AI algorithm~\cite{airpooling2023,chen2022analog,Marzetta_analogfb}. In particular, AirComp has been extensively studied for implementing over-the-air aggregation of local model updates in federated learning (FL) systems, leading to the emergence of an area called over-the-air FL~\cite{MZChen_JSACoverview}. Diversified design issues have been investigated including gradient sparsification~\cite{Air_distlearning}, beamforming~\cite{Ding2020TWC}, precoding~\cite{Eldar2021TSP}, power control~\cite{MXTao_powercontrol}, broadband transmission~\cite{GX_broadband}. 

Most recently, researchers also explored the applications of AirComp to realize over-the-air view aggregation in ISEA systems~\cite{airpooling2023,Task_oriented_DZ,ISACC_air2023,ISCC_DZ}. In~\cite{airpooling2023}, max-pooling, which is not directly AirComputable, is realized using AirComp via p-norm approximation of maximization. The parameter of the approximation function is optimized to balance the noise effect and approximation error. The optimization still uses the generic metric of AirComp error (i.e., the error in computed function values with respect to the noiseless case) instead of the E2E sensing accuracy though the two metrics are related by a derived inequality. Similarly, the AirComp error is adopted in~\cite{ISACC_air2023} as the performance metric to optimize a receive beamformer in a system supporting integrated MIMO radar sensing and AirComp. On the other hand, a different metric, discrimination gain, has been proposed to approximately measure the sensing accuracy with tractability. In~\cite{ISCC_DZ}, the effects of sensing, computation, and communication on the discrimination gain are quantified and controlled by designing a task-oriented resource management approach so as to optimize the E2E performance. The average discrimination gain for an individual feature dimension is further considered in~\cite{Task_oriented_DZ} to facilitate importance-aware beamforming that adapts effective channel gains of different sensors according to their importance levels accounting for both average discriminant gains and channel states. Wireless for ISEA is still a nascent area where prior work largely focuses on algorithmic designs. There still lacks a systematic framework for analyzing E2E performance. Specifically, in the aspect of multi-view sensing, the sensing accuracy sees continuous improvements with the growth of the number of sensors providing view diversity. There exist few results on quantifying the scaling law. On the other hand, in the aspect of air interface, the AirComp error diminishes with the increase of the number of links due to aggregation and exploitation of (channel) spatial diversity~\cite{Air_diversity}. The consideration of E2E performance for ISEA naturally couples the two aspects and gives rise to the following open research questions we attempt to answer sequentially in this work.
\begin{enumerate}
    \item \textbf{(View Aggregation)} Consider ISEA without channel distortion. \emph{How does the accuracy of multi-view sensing improve as the number of sensors grows?}
    \item \textbf{(View-and-Channel Aggregation)} Consider ISEA with wireless channels and using AirComp. \emph{How does the E2E sensing accuracy improve as the number of sensors grows?}
    \item \textbf{(Optimality of AirComp)} AirComp supports simultaneous access when spatial \emph{degrees of freedom} (DoFs) are insufficient for orthogonal access.  When many spatial DoFs are available, is AirComp still optimal for fast ISEA? 
\end{enumerate}

By making an attempt to answer these questions, we derive a theoretic framework for quantifying the E2E performance of an ISEA system implemented on the MVCNN architecture with an AirComp-based air interface. Key models and assumptions are summarized as follows. First, a well-established mathematical model for multi-view sensing is adopted~\cite{Multiview_analysis_2018, Multiview_analysis_2019}. In this real-data validated model, features extracted from sensor observations (e.g., images) are described as low-rank projections of a high-dimensional ground-truth feature map, where a projection matrix, called \emph{observation matrix}, reflects the spatial relationship between the associated sensor and the target object. Second, the feature map is assumed to distribute following the classic Gaussian mixture model (GMM) widely used in statistical learning~\cite{statistic_learning_book} and deep learning (see, e.g., ~\cite{GMM_pose_esti}). The model comprises multiple Gaussian clusters, each of which is tagged with an object-class label. Third, channel coefficients of the multiuser \emph{single-input-multi-output} (SIMO) uplink channel are assumed to be independent and identically distributed (i.i.d.) Rayleigh fading, representing spatial diversity from rich scattering and spatially separated sensors. Last, the E2E sensing accuracy is measured by the popular metric of sensing uncertainty that is computed as the entropy of posteriors of object classes conditioned on observations~\cite{ProgressFTX_Qiao,Entropy2ErrorPorb}.

Then the key contributions and findings of this work are summarized as follows.
\begin{itemize}
    \item \textbf{View Aggregation Gain}: To answer Research Question 1 for the noise-free case, the E2E sensing uncertainty is derived as a function of multiple factors including the number of sensors, number of (object) classes, and average differentiability of class pairs measured using the well known Mahalanobis distance. The average class differentiability is \emph{with respect to} (w.r.t.) the feature sub-space defined using the \emph{global observation matrix} that cascades the observation matrices of all sensors. The derivation exploits the tractability of GMM to derive asymptotically tight bounds on sensing uncertainty. The derived function reveals that view-aggregation gains in two aspects. On one hand, the monotonic reduction of sensing uncertainty w.r.t. the sensor population reflects its resultant suppression of \emph{sensing noise}. On the other hand, the function is also a monotone decreasing w.r.t. the average class differentiability that is in turn enhanced as a growing number of suitably scheduled sensors boosts the rank of the global observation matrix. When the number of views is large, the uncertainty function is shown to exhibit a simplified form \emph{linearly proportional} to the number of classes but diminish at an \emph{exponential rate} linearly proportional to the number of views/sensors. 

    \item \textbf{View-and-Channel Aggregation Gain}: To address Research Question 2, we consider ISEA with channel distortion induced by AirComp for implementing multi-view aggregation. Building on the preceding analysis and applying random-matrix theory, the sensing uncertainty is shown to scale similarly as its noiseless counterpart except for an additional linear scaling factor for the exponential decay rate. The factor represents the negative channel effect on average class differentiability and is proved to be a monotone decreasing function of the effective receive SNR after AirComp. The analysis reveals that as the number of sensors increases, aggregation suppresses noise sufficiently fast such that the exponential decay of sensing uncertainty in the noiseless case is retained albeit at a slower exponential rate. 

    \item \textbf{AirComp versus Orthogonal Access}: To answer Research Question 3, AirComp is benchmarked against analog orthogonal access, both of which support low-latency uncoded analog transmission~\cite{chen2022analog,Marzetta_analogfb}. AirComp's main advantage lies in supporting spatial simultaneous access even when spatial DoFs are insufficient for orthogonal access. However, as the receive array size increases, we show the existence of a crossing point (with array size and number of sensors approximately equal) above which analog orthogonal access outperforms AirComp in terms of sensing uncertainty. This motivates an adaptive scheme that switches between AirComp and analog orthogonal access depending on the available spatial DoFs.

    \item \textbf{Experiments}: The preceding analytical results are validated in ISEA experiments using both synthetic (i.e., GMM) and real datasets (i.e., ModelNet~\cite{MVCNN}).
\end{itemize}

The remainder of this paper is organized as follows. The multi-view sensing and communication models are elaborated in Section~\ref{section:model}. The analysis results for noiseless and AirComp-based view aggregation cases are presented in Section~\ref{section:analysis_1} and Section~\ref{section:analysis_2}, respectively. In Section~\ref{section:comparison}, we benchmark AirComp against analog orthogonal access, followed
by performing experiments in Section~\ref{section:experiments}.

% ---------------------------------------------------------------
%  Section : System Model
% 
% ---------------------------------------------------------------
\begin{figure*}[t]
    \centering
    \includegraphics[width=0.96\textwidth]{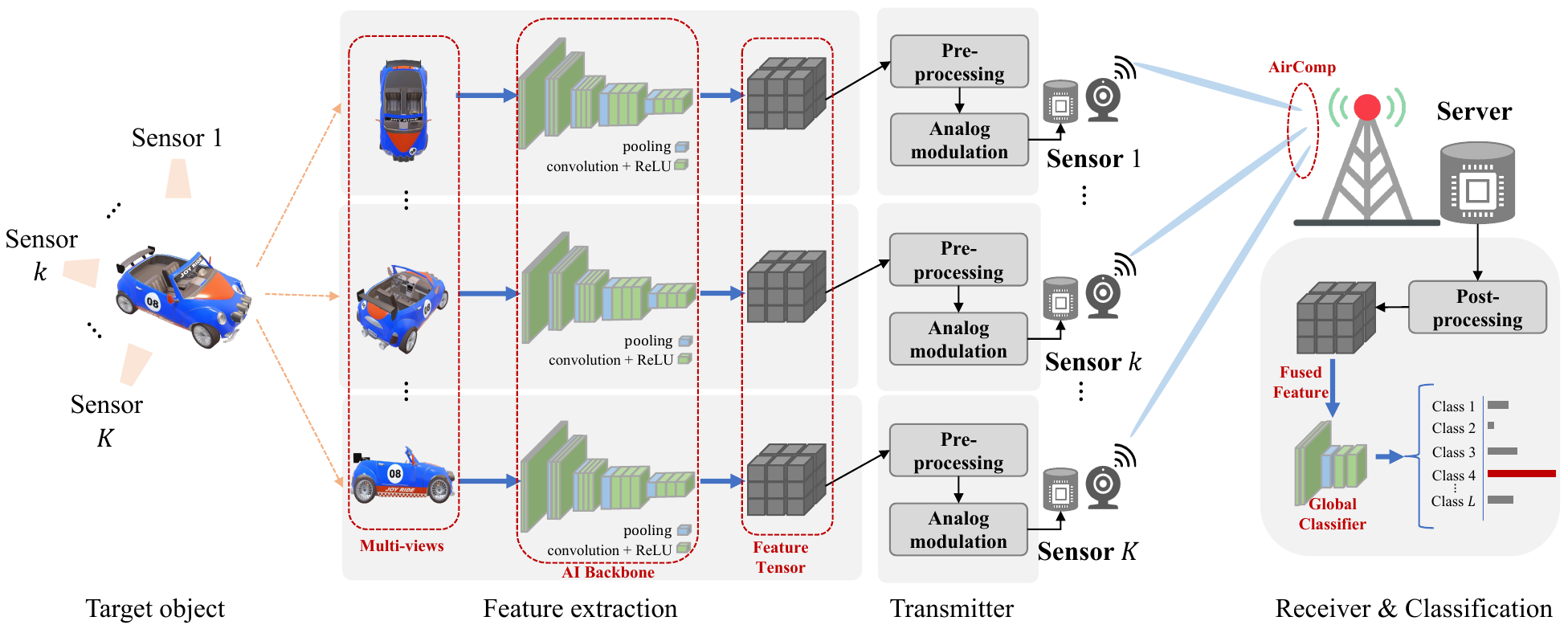}
    \caption{A system integrating multi-view sensing and edge AI.}
    \label{fig:system_model}
\end{figure*}
\section{System, Models, and Metrics}\label{section:model}
Consider an ISEA system where a server realizes remote object detection by leveraging AI-based multi-view sensing over $K$ distributed sensors, as illustrated in Fig.~\ref{fig:system_model}.  
Relevant operations, models, and metrics are described in the following sub-sections.
\subsection{Multi-View Sensing Model}
\subsubsection{Local Data Distribution} Each sensor, say sensor $k$, feeds its captured raw data (e.g. images) into a pre-trained model to generate a feature map, denoted by $\mathbf{f}_k\in\mathbb{R}^M$, that comprises $M$ real features. The distribution of $\mathbf{f}_k$ is given as follows. First, let $\mathbf{g}\in\mathbb{R}^M$ be the ground-truth feature map corresponding to the current object and be assumed to have a uniform prior distribution over $L$ classes~\cite{Task_oriented_DZ,ProgressFTX_Qiao}:
\begin{equation}\label{eq:uniform_prior}
    \mathrm{Pr}\left(\mathbf{g}=\bmu_{\ell}\right) = \frac{1}{L},\ \forall \ell,
\end{equation}
where $\bmu_{\ell}$ denotes the centroid of the $\ell$-th class in the feature space. Then, due to the limited physical view of the sensors, $\mathbf{f}_k$ represents a low-dimensional projection of $\mathbf{g}$~\cite{Multiview_analysis_2018, Multiview_analysis_2019}. Adopting a well-established multi-view sensing model in the literature (see, e.g.~\cite{Multiview_analysis_2018}), we can relate the feature map $\mathbf{f}_k$ to $\mathbf{g}$ as
\begin{equation}\label{eq:local_feature}
    \mathbf{f}_k = \mathbf{P}_{k}\mathbf{g} + \mathbf{w}_k,
\end{equation}
where $\mathbf{P}_{k}$ is the low-rank observation matrix of sensor $k$ and $\mathbf{w}_k$ represents the inherent sensing noise following an \emph{independent and identically distributed} (i.i.d.) Gaussian distribution $\mathcal{N}(\mathbf{0},\mathbf{C})$. The observation matrices $\{\mathbf{P}_k\}$ and the covariance matrix $\mathbf{C}$ can be learned by using subspace-representation networks~\cite{Multiview_analysis_2018, Multiview_analysis_2019} and are considered to be available to both devices and the server. It follows from~\eqref{eq:local_feature} that local feature maps follow the distribution of a \emph{Gaussian mixture model} (GMM)~\cite{statistic_learning_book}:
\begin{equation}\label{eq:local_PDF}
    \mathbf{f}_k\sim\frac{1}{L}\sum_{\ell=1}^{L}\mathcal{N}(\mathbf{P}_{k}\bmu_{\ell},\mathbf{C}),
\end{equation}
where $\mathcal{N}(\mathbf{P}_{k}\bmu_{\ell},\mathbf{C})$ represents a Gaussian distribution with mean $\mathbf{P}_{k}\bmu_{\ell}$ and covariance $\mathbf{C}$.  

\subsubsection{Global Classification} Next, $\{\mathbf{f}_k\}$ are uploaded to the server for classification as follows.
They are first fused into a single feature map, denoted by $\bar{\mathbf{f}}$, which is known as view aggregation (or pooling). The popular average aggregation is adopted as $\bar{\mathbf{f}}=\frac{1}{K}\sum_{k=1}^K\mathbf{f}_k$~\cite{MVCNN}. Then, $\bar{\mathbf{f}}$ is fed into a classifier for inference. We consider two types of classifiers.
\begin{itemize}
    \item \textbf{Linear Classification:} A linear classifier is considered in analysis for tractability~\cite{statistic_learning_book}. Consider a case of two classes ($L=2$), the linear classifier distinguishes the pair of classes by using a classification boundary between their clusters, which is defined as a hyperplane in the feature space
    \begin{equation}
        \mathcal{H}(\balpha,\beta) = \{\mathbf{f}:\balpha^{\top}\mathbf{f} + \beta=0\}.
    \end{equation}
    The optimal label of an input feature map $\bar{\mathbf{f}}$ is assigned as one class if $\bar{\mathbf{f}}$ is determined to be above the hyperplane (i.e., $\balpha^{\top}\bar{\mathbf{f}} + \beta\geq 0$); otherwise, $\bar{\mathbf{f}}$ is labeled as the other class.
    In a general case with $L>2$, there are $L(L-1)/2$ classification boundaries, and the optimal result can be obtained via sequential conduction of the one-versus-one classification. Given equal priors of classes, $\{\mathbf{f}_k\}$ follow the distribution in~\eqref{eq:local_PDF}. The optimal $L$-class linear classifier is the \emph{maximum likelihood} (ML) design~\cite{statistic_learning_book}: 
    \begin{equation}\label{eq:ML_classifier}
        \ell^{\star}=\arg\max_{\ell}\ \log\mathrm{Pr}\left(\bar{\mathbf{f}}|\bmu_{\ell}\right).
    \end{equation}
    
    \item \textbf{CNN Classification:} MVCNN model is adopted in experiments. The model consists of two parts, $\mathcal{F}_1$ and $\mathcal{F}_2$, that are employed at sensors and the server, respectively. The sub-model  $\mathcal{F}_1$ is identical for sensors and used to extract local features from sensing data. After view aggregation, the obtained aggregated feature vector $\bar{\mathbf{f}}$ is fed into $\mathcal{F}_2$ that outputs scores for individual classes. The class with the highest score is selected as the prediction result. 
\end{itemize}  

\subsection{Multi-Access Models}
For the ISEA system in Fig. 1, we mainly consider analog multi-access techniques for enabling efficient simultaneous access (i.e., view-and-channel aggregation). We also explore ISEA with noiseless feature aggregation in Section III, aligning with scenarios involving reliable digital transmission. In the class of analog transmission, we primarily adopt AirComp for feature aggregation. To investigate its optimality, we further consider analog orthogonal access, namely orthogonal access with fast analog transmission~\cite{Marzetta_analogfb}, as a benchmark scheme. It achieves the same multi-access latency as AirComp but requires receive spatial DoF to be equal to or exceed the number of sensors. The assumptions and operations of the schemes are described as follows. The server and sensors are equipped with $N$-element array and a single antenna, respectively. Assuming a frequency non-selective channel, time is slotted and each slot is used for transmitting one symbol. Block fading is considered such that the channel remains unchanged over a coherence duration comprising $T$ time slots. Symbol-level synchronization is assumed over all sensors.

\subsubsection{Analog Transmission}
In an arbitrary time slot, say slot $t$, sensors simultaneously transmit their linear analog modulated data symbol, $\{x_{k,t}\}$, leading to the server receiving a symbol vector:
\begin{equation}\label{eq:model}
\mathbf{y}_t=\sum_{k}\rho_k\mathbf{h}_kx_{k,t}+\mathbf{z}_t,
\end{equation}
where $\rho_k$ represents transmit power, $\mathbf{h}_k\in\mathbb{C}^{N\times 1}$ denotes the channel vector of sensor $k$, and $\mathbf{z}_t\sim\mathcal{CN}(\mathbf{0},\sigma^2\mathbf{I}_N)$ models additive channel noise. 
Assuming Rayleigh fading, $\mathbf{h}_k$ is composed of i.i.d. $\mathcal{CN}(0,1)$ entries and is independent between sensors. Let $\nu^2\overset{\triangle}{=}\mathsf{E}\left[\frac{1}{T}\sum_{t=1}^Tx_{k,t}^2\right]$ be the variance of transmitted symbols over a channel coherence block. Each sensor is constrained by a power budget of $P$, i.e., $\rho_k^2\nu^2\leq P$. Then, the transmit SNR is defined as $\gamma = \frac{P}{\sigma^2}$. 

Let $\mathbf{x}_k = [x_{k,1},\cdots,x_{k,M}]$ denote the symbol vector transmitted from sensor $k$ over $M$ time slots with $M\leq T$. The data vector in~\eqref{eq:model} received over $M$ time slots can be aggregated into a matrix symbol, $\mathbf{Y} = [\mathbf{y}_1,\cdots,\mathbf{y}_M]$:
\begin{equation}\label{eq:symbol_matrix}
    \mathbf{Y} = \sum_{k}\rho_k\mathbf{h}_k\mathbf{x}_k+\mathbf{Z},
\end{equation}
where $\mathbf{Z} = [\mathbf{z}_1,\cdots,\mathbf{z}_M]$. 

\subsubsection{Receiver for AirComp}
The symbol vector $\mathbf{x}_k$ in~\eqref{eq:symbol_matrix} is computed as $\mathbf{x}_k=(\mathbf{f}_k-\mathbf{f}_{k}^{\mathsf{avg}})^{\top}$ with $\mathbf{f}_{k}^{\mathsf{avg}} = \mathsf{E}\left[\mathbf{f}_k\right]$ to have zero mean. The parameter $\mathbf{f}_{k}^{\mathsf{avg}}$ is also available for both the server to receive the features. Following the AirComp literature, \emph{zero-forcing} (ZF) transmit power control is adopted to overcome channel distortion, $\rho_k = (\mathbf{b}^H\mathbf{h}_k)^{-1}$ with $\mathbf{b}\in \mathbb{C}^{N}$ being a receive combiner (see, e.g.,~\cite{GXZhuAirComp2019}). The output is given as $\mathbf{s}=\mathbf{Y}^H\mathbf{b}=\sum_{k}\mathbf{f}_k-\mathbf{f}_{k}^{\mathsf{avg}} + \mathbf{Z}^H\mathbf{b}$. From $\mathbf{s}$, a noisy version of $\bar{\mathbf{f}}$, denoted by $\tilde{\mathbf{f}}$, can be obtained by the following post-processing:
\begin{equation}\label{eq:post_process}
    \begin{aligned}
        \tilde{\mathbf{f}}
        = \frac{1}{K}\mathbf{s} + \frac{1}{K}\sum_{k}\mathbf{f}_{k}^{\mathsf{avg}}
        = \bar{\mathbf{f}} + \frac{1}{K}\mathbf{Z}^H\mathbf{b}.
    \end{aligned}
\end{equation}
\subsubsection{Receiver for Orthogonal Multi-Access}
The simultaneous data streams in~\eqref{eq:symbol_matrix} are orthogonalized via receive beamforming. It is optimal to maximize transmit power at each sensor as $\rho_k = \frac{\sqrt{P}}{\nu}$. Then the received symbol matrix can be rewritten as $\mathbf{Y}=\sum_k\frac{\sqrt{P}}{\nu}\mathbf{h}_k\mathbf{x}_k + \mathbf{Z}$. Let $\mathbf{e}_k=[0,\cdots,1,\cdots,0]^{\top}$ denote the standard basis vector with the $k$-th element being 1. The data stream of sensor $k$, denoted by $\mathbf{s}_k$, can be extracted from  $\mathbf{Y}$ using a ZF beamformer $\mathbf{b}_k = \mathbf{H}(\mathbf{H}^H\mathbf{H})^{-1}\mathbf{e}_k$ with $\mathbf{H} = [\mathbf{h}_1,\cdots,\mathbf{h}_K]$~\cite{Taesang2006}:
\begin{equation}
\begin{aligned}
        \mathbf{s}_k= \mathbf{Y}^H\mathbf{b}_k= \frac{\sqrt{P}}{\nu}\left(\bar{\mathbf{f}}_k-\mathbf{f}_{k}^{\mathsf{avg}}\right)+\mathbf{Z}^H\mathbf{b}_k.
\end{aligned}
\end{equation}
By slight abuse of notation, let $\tilde{\mathbf{f}}$ also denote the noisy version of the desired aggregated feature vector in the current case. Then it can be obtained by the following post-processing:
\begin{equation}\label{eq:OA_analog}
    \begin{aligned}
    \tilde{\mathbf{f}}=\frac{1}{K}\sum_{k}\left(\frac{\nu}{\sqrt{P}}\mathbf{s}_k+\mathbf{f}_{k}^{\mathsf{avg}}\right)= \bar{\mathbf{f}} + \frac{\nu}{K\sqrt{P}}\mathbf{Z}^H\sum_k\mathbf{b}_k.
\end{aligned}
\end{equation}

\subsection{E2E Performance Metric}
Typically, the sensing (inference) accuracy is  defined as the probability of correct classification. For the purpose of tractable analysis, we also consider the following two relevant metrics for evaluating the E2E sensing performance.
\subsubsection{Sensing Uncertainty}
As a popular measure related to sensing (inference) accuracy, the metric is defined as the entropy of posteriors of classification classes given the aggregated feature~\cite{Entropy2ErrorPorb}. Mathematically, given the aggregated feature map $\tilde{\mathbf{f}}$, its sensing uncertainty, denoted by $H$, is given as
\begin{equation}\label{eq:uncertainty}
    H= \mathsf{E}_{\tilde{\mathbf{f}}}\left[-\sum_{\ell=1}^{L}\mathrm{Pr}\left(\bmu_{\ell}|\tilde{\mathbf{f}}\right)\log\mathrm{Pr}\left(\bmu_{\ell}|\tilde{\mathbf{f}}\right)\right].
\end{equation}

\subsubsection{Discrimination Gain}
The sensing accuracy is largely determined by the discernibility between a pair of classes that can be measured by \emph{discrimination gains} computed as their \emph{symmetric Kullback-Leibler (KL) divergence}~\cite{ProgressFTX_Qiao}. Considering sensor $k$, the local discrimination gain between classes $\ell$ and $\ell^{\prime}$, denoted as $G_k(\ell,\ell^{\prime})$, can be computed as
\begin{align}\label{eq:discrimination_gain}
G_k(\ell,\ell^{\prime})=&\mathsf{KL}\left(\mathcal{N}\left(\mathbf{P}_k\bmu_{\ell},\mathbf{C}\right)||\mathcal{N}\left(\mathbf{P}_k\bmu_{\ell^{\prime}},\mathbf{C}\right)\right) \nonumber\\
    &+ \mathsf{KL}\left(\mathcal{N}\left(\mathbf{P}_k\bmu_{\ell^{\prime}},\mathbf{C}\right)||\mathcal{N}\left(\mathbf{P}_k\bmu_{\ell},\mathbf{C}\right)\right)\nonumber\\
        =&\left(\bmu_{\ell}-\bmu_{\ell^{\prime}}\right)^{\top}\mathbf{P}_k\mathbf{C}^{-1}\mathbf{P}_k\left(\bmu_{\ell}-\bmu_{\ell^{\prime}}\right).
\end{align}
The global discriminant gain is derived in the next section.

% ---------------------------------------------------------------
%  Section: Analysis of Multi-View Pooling
% 
% ---------------------------------------------------------------

\section{Multi-view Aggregation Gain
without Channel Distortion}\label{section:analysis_1}
In this section, we consider the scenario with the absence of channel noise and focus on analyzing the E2E performance of an ISEA system in terms of sensing uncertainty. The tractable analysis consists of three steps presented in separate sub-sections, namely characterizing the distribution of aggregated features, designing a suitable  surrogate function for sensing uncertainty, and deriving the scaling laws of sensing uncertainty.

\subsection{Aggregated Feature Distribution}
The computation of sensing uncertainty in~\eqref{eq:uncertainty} relies on an explicit distribution of the aggregated feature map, $\bar{\mathbf{f}}$,  at the input to the classifier. Based on the GMM model of local features, the desired result is derived  as shown below.

\begin{Lemma}[Distribution of Aggregated Feature Map]\label{Lemma:pooled_feature}
    \emph{Based on the distribution of local feature maps in~\eqref{eq:local_PDF} and in the absence of channel noise, the aggregated feature $\bar{\mathbf{f}}$ follows a Gaussian-mixture distribution given as
    \begin{equation*}
    \bar{\mathbf{f}}\sim\frac{1}{L}\sum_{\ell=1}^{L}\mathcal{N}\left(\bar{\mathbf{P}}\bmu_{\ell},\frac{1}{K}\mathbf{C}\right),
    \end{equation*}
    where $\bar{\mathbf{P}} = \frac{1}{K}\sum_k\mathbf{P}_{k}$ denotes the average  of local observation matrices, termed \emph{global observation matrix}.}
\end{Lemma}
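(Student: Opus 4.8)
The plan is to condition on the shared ground-truth feature map $\mathbf{g}$ and then marginalize over its $L$ equally-likely values. The key observation is that all $K$ sensors observe the \emph{same} underlying object, so they share a single realization of $\mathbf{g}$ in~\eqref{eq:local_feature}; the mixture structure in~\eqref{eq:local_PDF} is merely the marginal law of each individual $\mathbf{f}_k$, and the aggregation must respect this common latent variable rather than treating the $\{\mathbf{f}_k\}$ as independent mixture draws.

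First I would substitute~\eqref{eq:local_feature} into the definition $\bar{\mathbf{f}}=\frac{1}{K}\sum_k\mathbf{f}_k$ to obtain
\begin{equation*}
    \bar{\mathbf{f}} = \left(\frac{1}{K}\sum_k\mathbf{P}_k\right)\mathbf{g} + \frac{1}{K}\sum_k\mathbf{w}_k = \bar{\mathbf{P}}\mathbf{g} + \frac{1}{K}\sum_k\mathbf{w}_k,
\end{equation*}
which isolates the term that is deterministic given $\mathbf{g}$ from the aggregated sensing noise, and exhibits the \emph{global observation matrix} $\bar{\mathbf{P}}$ as the natural object multiplying $\mathbf{g}$.

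Second, conditioned on the event $\mathbf{g}=\bmu_{\ell}$, the residual $\frac{1}{K}\sum_k\mathbf{w}_k$ is a scaled sum of $K$ independent $\mathcal{N}(\mathbf{0},\mathbf{C})$ vectors. Invoking stability of the Gaussian family under linear combinations, $\sum_k\mathbf{w}_k\sim\mathcal{N}(\mathbf{0},K\mathbf{C})$, so the $1/K$ scaling gives $\frac{1}{K}\sum_k\mathbf{w}_k\sim\mathcal{N}(\mathbf{0},\frac{1}{K}\mathbf{C})$. Since a fixed affine shift only displaces the mean, the conditional law is $\bar{\mathbf{f}}\mid(\mathbf{g}=\bmu_{\ell})\sim\mathcal{N}(\bar{\mathbf{P}}\bmu_{\ell},\frac{1}{K}\mathbf{C})$.

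Finally, applying the law of total probability together with the uniform prior~\eqref{eq:uniform_prior} yields the stated mixture $\bar{\mathbf{f}}\sim\frac{1}{L}\sum_{\ell=1}^{L}\mathcal{N}(\bar{\mathbf{P}}\bmu_{\ell},\frac{1}{K}\mathbf{C})$. I expect the only genuine obstacle to be conceptual rather than computational: one must correctly recognize that the $K$ local features are coupled through the single latent $\mathbf{g}$, so that the noise averages down by the factor $1/K$ in the covariance while the cluster means collapse onto $\bar{\mathbf{P}}\bmu_{\ell}$; everything else reduces to the elementary Gaussian convolution identity.
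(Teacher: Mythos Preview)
Your proposal is correct and follows essentially the same route as the paper: substitute~\eqref{eq:local_feature} into the average to obtain $\bar{\mathbf{f}}=\bar{\mathbf{P}}\mathbf{g}+\frac{1}{K}\sum_k\mathbf{w}_k$, identify the aggregated noise as $\mathcal{N}(\mathbf{0},\frac{1}{K}\mathbf{C})$, and combine with the uniform prior on $\mathbf{g}$ to recover the mixture. Your explicit emphasis that the sensors share a single latent $\mathbf{g}$ (so the $\{\mathbf{f}_k\}$ are not independent mixture draws) is a helpful clarification, but the argument itself is the same as the paper's.
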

\begin{proof}
    (See Appendix~\ref{Apdx:pooled_feature}).
\end{proof}

Comparing Lemma~\ref{Lemma:pooled_feature} and~\eqref{eq:local_PDF}, it is observed that the multi-view aggregation retains the distribution of features except for 1) replacing the cluster centroid of individual classes with their projections onto the global observation matrix and 2) narrowing the cluster size by reducing covariance by the factor $1/K$. These two factors contribute to the multi-view aggregation gain in sensing performance, which is quantified in the sequel. 

Using Lemma~\ref{Lemma:pooled_feature} and~\eqref{eq:ML_classifier}, the optimal linear classifier with  $\bar{\mathbf{f}}$ as input can be written as 
\begin{equation}\label{eq:noiseless_classifier}
\begin{aligned}
    \ell^{\star}=\arg\min_{\ell}\ \left(\bar{\mathbf{f}}-\bar{\mathbf{P}}\bmu_{\ell}\right)^{\top}\mathbf{C}^{-1}\left(\bar{\mathbf{f}}-\bar{\mathbf{P}}\bmu_{\ell}\right).
\end{aligned}
\end{equation}
The  uniqueness of the inferred label $\ell^{\star}$ is guaranteed by the independence among views and $\bar{\mathbf{f}}$ being a continuous random variable. 
\subsection{Surrogate Function for Sensing Uncertainty}
For the purpose of tractability, we propose a simpler but scaling-tight surrogate function for sensing uncertainty. To this end, let  $G_{\ell,\ell^{\prime}}$ denote the global discrimination gain of $\bar{\mathbf{f}}$. Using  the local discriminant gain in~\eqref{eq:discrimination_gain}, the expression of $G_{\ell,\ell^{\prime}}$ can be obtained as shown below. 

\begin{Lemma}[Global Discrimination Gain]\label{Lemma:global_gain}
    \emph{Using $\bar{\mathbf{f}}$ for classification, the global discrimination gain between class $\ell$ and $\ell^{\prime}$ is given as $G_{\ell,\ell^{\prime}} =  KD_{\ell,\ell^{\prime}}$ where 
    \begin{equation*}
    D_{\ell,\ell^{\prime}}=\left(\bmu_{\ell}-\bmu_{\ell^{\prime}}\right)^{\top}\bar{\mathbf{P}}\mathbf{C}^{-1}\bar{\mathbf{P}}\left(\bmu_{\ell}-\bmu_{\ell^{\prime}}\right).
    \end{equation*}}
\end{Lemma}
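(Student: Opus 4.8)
The plan is to define the global discrimination gain $G_{\ell,\ell^{\prime}}$ exactly as in~\eqref{eq:discrimination_gain}, namely as the symmetric KL divergence between the two class-conditional distributions of the classifier input, but now taking that input to be the aggregated feature $\bar{\mathbf{f}}$ rather than a local feature $\mathbf{f}_k$. The whole computation then reduces to evaluating this divergence against the aggregated distribution already supplied by Lemma~\ref{Lemma:pooled_feature}.

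First I would invoke Lemma~\ref{Lemma:pooled_feature} to identify the two relevant clusters as $\mathcal{N}(\bar{\mathbf{P}}\bmu_{\ell},\frac{1}{K}\mathbf{C})$ and $\mathcal{N}(\bar{\mathbf{P}}\bmu_{\ell^{\prime}},\frac{1}{K}\mathbf{C})$. Crucially, these two Gaussians share the common covariance $\frac{1}{K}\mathbf{C}$, which is what makes the symmetric KL collapse to a clean quadratic form. Next I would apply the standard closed form for the KL divergence between two Gaussians with equal covariance $\boldsymbol{\Sigma}$ and means $\mathbf{a},\mathbf{b}$: the one-directional divergence equals $\frac{1}{2}(\mathbf{a}-\mathbf{b})^{\top}\boldsymbol{\Sigma}^{-1}(\mathbf{a}-\mathbf{b})$, so that, by the symmetry of this expression in $\mathbf{a}$ and $\mathbf{b}$, the two directions coincide and their sum---the symmetric KL---is $(\mathbf{a}-\mathbf{b})^{\top}\boldsymbol{\Sigma}^{-1}(\mathbf{a}-\mathbf{b})$.

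Substituting $\mathbf{a}=\bar{\mathbf{P}}\bmu_{\ell}$, $\mathbf{b}=\bar{\mathbf{P}}\bmu_{\ell^{\prime}}$, and $\boldsymbol{\Sigma}=\frac{1}{K}\mathbf{C}$, and using $\boldsymbol{\Sigma}^{-1}=K\mathbf{C}^{-1}$, the scalar factor $K$ factors out of the quadratic form, leaving $G_{\ell,\ell^{\prime}}=K(\bmu_{\ell}-\bmu_{\ell^{\prime}})^{\top}\bar{\mathbf{P}}\mathbf{C}^{-1}\bar{\mathbf{P}}(\bmu_{\ell}-\bmu_{\ell^{\prime}})=KD_{\ell,\ell^{\prime}}$, as claimed.

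There is no genuine technical obstacle here; the result is essentially a one-line specialization of~\eqref{eq:discrimination_gain} to the aggregated distribution. The only point worth stressing is conceptual: the multiplicative factor $K$---the discrimination-gain counterpart of the view-aggregation gain---originates entirely from the covariance shrinkage $\mathbf{C}\mapsto\frac{1}{K}\mathbf{C}$ established in Lemma~\ref{Lemma:pooled_feature}, which itself reflects the averaging of independent sensing noise across the $K$ views. The structural quantity $D_{\ell,\ell^{\prime}}$ meanwhile captures the effect of replacing each per-sensor observation matrix $\mathbf{P}_k$ by the global observation matrix $\bar{\mathbf{P}}$, so that the two aggregation benefits flagged after Lemma~\ref{Lemma:pooled_feature} surface cleanly as the separate factors $K$ and $D_{\ell,\ell^{\prime}}$.
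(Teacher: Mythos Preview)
Your proposal is correct and matches the paper's approach: the paper treats Lemma~\ref{Lemma:global_gain} as an immediate consequence of applying the symmetric-KL definition in~\eqref{eq:discrimination_gain} to the aggregated distribution from Lemma~\ref{Lemma:pooled_feature}, without even supplying a separate proof in the appendix. Your write-up is, if anything, more explicit than the paper itself.
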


In Lemma~\ref{Lemma:global_gain}, $D_{\ell,\ell^{\prime}}$ is  equal to the Mahalanobis distance between classes $\ell$ and $\ell^{\prime}$, which reflects their   differentiability~\cite{Mahalanobis1936}. Then, leveraging the distribution of $\bar{\mathbf{f}}$ in Lemma~\ref{Lemma:pooled_feature} and optimization theory, the resulting sensing uncertainty is obtained as follows.
\begin{Proposition}[Sensing Uncertainty]\label{Prop:uncertainty}
    \emph{In the case of  linear classification, the E2E sensing uncertainty, $H$,  in~\eqref{eq:uncertainty}  can be bounded as
    \begin{equation*}
    \begin{aligned}
        &\frac{1}{L}\sum_{\ell=1}^{L}\log\left[1+\sum_{\ell^{\prime}\neq\ell}\exp\left(-\frac{D_{\ell,\ell^{\prime}}}{2}K\right)\right]\leq H\leq\\
        &
        \frac{1}{L}\sum_{\ell=1}^{L}\log\left[1+\sum_{\ell^{\prime}\neq\ell}\exp\left(-\frac{D_{\ell,\ell^{\prime}}}{cM+2}K\right)\right] +  C_a,
    \end{aligned}
    \end{equation*}
    where $c>0$ is arbitrary, the constant $C_a = \log \frac{c e^{\frac{1}{c}}}{1+c}$, and $D_{\ell,\ell^{\prime}}$ is given in Lemma~\ref{Lemma:global_gain}.}
\end{Proposition}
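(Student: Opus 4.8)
The plan is to reduce the expected posterior entropy $H$ to an average \emph{surprisal} of the true class, and then to sandwich that quantity using a single convexity step for the lower bound and a variational (AM--GM) decomposition for the upper bound. Because the class prior is uniform, $H$ equals the conditional entropy $H(\text{class}\mid\bar{\mathbf{f}})$, which I would rewrite by conditioning on the true class being $\ell$ (so that $\bar{\mathbf{f}}=\bar{\mathbf{P}}\bmu_\ell+\mathbf{n}$ with $\mathbf{n}\sim\mathcal{N}(\mathbf{0},\tfrac1K\mathbf{C})$ by Lemma~\ref{Lemma:pooled_feature}) as
\begin{equation*}
H=\frac{1}{L}\sum_{\ell=1}^{L}\mathsf{E}_{\mathbf{n}}\!\left[-\log\mathrm{Pr}\!\left(\bmu_\ell\mid\bar{\mathbf{P}}\bmu_\ell+\mathbf{n}\right)\right].
\end{equation*}
Feeding the classifier~\eqref{eq:noiseless_classifier} into Bayes' rule turns the integrand into the log-sum-exp
\begin{equation*}
-\log\mathrm{Pr}\!\left(\bmu_\ell\mid\bar{\mathbf{f}}\right)=\log\!\left[1+\sum_{\ell'\neq\ell}\exp\!\left(-\tfrac{K}{2}\big(2a_{\ell'}+D_{\ell,\ell'}\big)\right)\right],
\end{equation*}
where $a_{\ell'}=\mathbf{n}^{\top}\mathbf{C}^{-1}\bar{\mathbf{P}}(\bmu_\ell-\bmu_{\ell'})$ is zero-mean Gaussian with variance $D_{\ell,\ell'}/K$ and $D_{\ell,\ell'}$ is the Mahalanobis distance of Lemma~\ref{Lemma:global_gain}. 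This representation cleanly separates the deterministic gap $D_{\ell,\ell'}$ from the noise term $a_{\ell'}$ and is the common starting point for both bounds.

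For the lower bound I would apply Jensen's inequality. The map $(x_{\ell'})\mapsto\log(1+\sum_{\ell'}e^{x_{\ell'}})$ is convex, so moving the expectation inside and using $\mathsf{E}[a_{\ell'}]=0$ gives
\begin{equation*}
\mathsf{E}_{\mathbf{n}}\!\left[-\log\mathrm{Pr}\!\left(\bmu_\ell\mid\bar{\mathbf{f}}\right)\right]\ge\log\!\left[1+\sum_{\ell'\neq\ell}\exp\!\left(-\tfrac{D_{\ell,\ell'}}{2}K\right)\right],
\end{equation*}
and averaging over $\ell$ yields the claimed lower bound; this is a short step and explains the denominator $2$.

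For the upper bound I would trade the random cross term against the noise energy by a weighted AM--GM, writing $\|\mathbf{v}\|^2_{\mathbf{C}^{-1}}=\mathbf{v}^{\top}\mathbf{C}^{-1}\mathbf{v}$: for $t=1+\tfrac{2}{cM}$, Cauchy--Schwarz plus $2\sqrt{XY}\le tX+Y/t$ gives $-2a_{\ell'}\le t\,\mathbf{n}^{\top}\mathbf{C}^{-1}\mathbf{n}+\tfrac1t D_{\ell,\ell'}$, hence $-\tfrac{K}{2}(2a_{\ell'}+D_{\ell,\ell'})\le\tfrac{Kt}{2}\mathbf{n}^{\top}\mathbf{C}^{-1}\mathbf{n}-\tfrac{K}{2}(1-\tfrac1t)D_{\ell,\ell'}$. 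Since this choice of $t$ makes $\tfrac{K}{2}(1-\tfrac1t)=\tfrac{K}{cM+2}$, exponentiating, summing, and factoring out the common $\exp(\tfrac{Kt}{2}\mathbf{n}^{\top}\mathbf{C}^{-1}\mathbf{n})$ (which exceeds $1$, so it also dominates the ``$+1$'' term) yields the pointwise bound
\begin{equation*}
-\log\mathrm{Pr}\!\left(\bmu_\ell\mid\bar{\mathbf{f}}\right)\le\tfrac{Kt}{2}\mathbf{n}^{\top}\mathbf{C}^{-1}\mathbf{n}+\log\!\left[1+\sum_{\ell'\neq\ell}\exp\!\left(-\tfrac{D_{\ell,\ell'}}{cM+2}K\right)\right].
\end{equation*}
This already produces the stated decay rate and the coefficient-one log-sum-exp term; what remains is to show that the residual quadratic form contributes exactly the constant $C_a=\log\frac{ce^{1/c}}{1+c}$.

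The main obstacle is precisely this last step. The naive estimate $\mathsf{E}[\mathbf{n}^{\top}\mathbf{C}^{-1}\mathbf{n}]=M/K$ leaves a residual $\tfrac{tM}{2}=\tfrac{M}{2}+\tfrac1c$, which is $O(M)$ and far too large; indeed, every route that preserves the coefficient-one log term through a pointwise bound incurs this $O(M)$ penalty. To recover the claimed $O(1)$ constant I would avoid bounding the log by the mean of the noise energy and instead control the \emph{exponential moment} of the chi-square variable $\mathbf{n}^{\top}\mathbf{C}^{-1}\mathbf{n}\sim\tfrac1K\chi^2_M$, using the closed form $\mathsf{E}\big[e^{-\frac{Kr}{2}(\mathbf{n}+\mathbf{v})^{\top}\mathbf{C}^{-1}(\mathbf{n}+\mathbf{v})}\big]=(1+r)^{-M/2}e^{-\frac{K}{2}\frac{r}{1+r}\mathbf{v}^{\top}\mathbf{C}^{-1}\mathbf{v}}$. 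With $r=\tfrac{2}{cM}$ the exponent reproduces the rate $\tfrac{K}{cM+2}D_{\ell,\ell'}$ while the prefactor satisfies $(1+\tfrac{2}{cM})^{-M/2}\to e^{-1/c}$ in the many-spatial-DoF regime, so the $\tfrac{M}{2}$ terms cancel and the surviving constant collapses to $\log\frac{ce^{1/c}}{1+c}$. The delicate part is that the variational parameter must simultaneously make the deterministic exponent decay and keep the chi-square tilt integrable, two requirements that pull it in opposite directions; balancing them in the large-$M$ regime is the technical heart of the argument and the step I expect to require the most care.
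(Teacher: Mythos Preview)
Your lower bound is exactly the paper's argument: Jensen applied to the convex log--sum--exp together with $\mathsf{E}[a_{\ell'}]=0$.

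For the upper bound there is a real gap. The pointwise Cauchy--Schwarz/AM--GM step is not merely a detour that needs refining; once $\tfrac{Kt}{2}\,\mathbf{n}^{\top}\mathbf{C}^{-1}\mathbf{n}$ has been pulled \emph{outside} the logarithm its expected contribution is irrevocably $\tfrac{tM}{2}=\tfrac{M}{2}+\tfrac{1}{c}$, and no exponential-moment treatment of $\|\mathbf{n}\|^2$ can recover an $O(1)$ constant from there. The ``$\tfrac{M}{2}$ terms cancel'' that you anticipate does not occur along this route, because the chi-square moment only helps when it stays \emph{inside} the log where Jensen can act on it. So your proposed fix, as stated, does not close the hole.

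The paper's device is to insert the tilt \emph{before} Jensen rather than after any pointwise bound: for any $a>0$ write
\begin{equation*}
\log\frac{\sum_{\ell'}e^{-\frac{K}{2}\|\mathbf{x}+\bar{\mathbf{P}}\bphi_{\ell,\ell'}\|^2_{\mathbf{C}^{-1}}}}{e^{-\frac{K}{2}\|\mathbf{x}\|^2_{\mathbf{C}^{-1}}}}
=\frac{K}{a}\|\mathbf{x}\|^2_{\mathbf{C}^{-1}}
+\log\Bigl[\sum_{\ell'}e^{-\frac{K}{2}\bigl(\frac{2}{a}\|\mathbf{x}\|^2_{\mathbf{C}^{-1}}+2\mathbf{x}^{\top}\mathbf{C}^{-1}\bar{\mathbf{P}}\bphi_{\ell,\ell'}+D_{\ell,\ell'}\bigr)}\Bigr],
\end{equation*}
take expectations, and apply concavity of $\log$ only to the second term. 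Your Gaussian identity (with $r=2/a$) then evaluates each summand exactly to $(1+2/a)^{-M/2}e^{-\frac{K}{a+2}D_{\ell,\ell'}}$, yielding
\begin{equation*}
H\le \frac{M}{a}-\frac{M}{2}\log\Bigl(1+\frac{2}{a}\Bigr)+\frac{1}{L}\sum_{\ell}\log\Bigl[1+\sum_{\ell'\neq\ell}e^{-\frac{K}{a+2}D_{\ell,\ell'}}\Bigr].
\end{equation*}
Now setting $a=cM$ makes the first piece $1/c$ directly (there is no $M/2$ to cancel), and Bernoulli's inequality $(1+2/(cM))^{M/2}\ge 1+1/c$ turns the second piece into at most $-\log(1+1/c)$, giving $C_a=\log\frac{ce^{1/c}}{1+c}$ exactly. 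The ingredient you are missing is this one-line split; everything else you wrote (the Gaussian integral, the parameter $r=2/(cM)$, the Bernoulli-type control of $(1+2/(cM))^{-M/2}$) is correct and is precisely what the paper uses once the tilt is in the right place.
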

\begin{proof}
    (See Appendix~\ref{Apdx:uncertainty}).
\end{proof}

The relatively complex expression of $H$ in~\eqref{eq:uncertainty} does not allow tractable analysis of its scaling laws. Nevertheless, its bounds in  Proposition~\ref{Prop:uncertainty} suggests that $H$  can be approximated by the following  surrogate function: 
\begin{equation}\label{eq:surrogate}
    H_{\mathsf{s}} = \frac{1}{L}\sum_{\ell=1}^{L}\log\left[1+\sum_{\ell^{\prime}\neq\ell}\exp\left(-\kappa D_{\ell,\ell^{\prime}}K\right)\right],
\end{equation}
where $\kappa$ can be $\frac{1}{2}$ and $\frac{1}{cM+2}$ corresponding to the lower and upper bounds in Proposition~\ref{Prop:uncertainty}, respectively. The function is found to follow a similar scaling law as $H$, which is essential for subsequent asymptotic analysis. The scaling-tight property of $H_{\mathsf{s}}$ is validated  using the following numerical example.
\begin{Example}[Numerical Validation]\label{Example:toy}
\emph{Let the feature dimension and the number of classes be equal: $M=L=5$. The covariance of the local feature distribution is set as $\mathbf{C} = 0.1\mathbf{I}_M$. The curves  of exact sensing uncertainty and surrogate function in~\eqref{eq:surrogate} are plotted in Fig.~\ref{fig:approximation} for a variable number of views/sensors. By comparing the curves, one can observe that the  sensing uncertainty is not necessarily a monotonically decreasing function w.r.t. $K$ (e.g., from $K =2$ to $K =3$) when $K$ and the local observation DoFs, referring to $\mathsf{rank}(\mathbf{P}_k)$, are small. 
The monotonicity of multi-view aggregation w.r.t. $K$ emerges when local sensors acquire sufficiently many observation DoFs (i.e., $2$), as shown in Fig.~\ref{subfig:dim2}. 
Based on observations from Fig.~\ref{fig:approximation}, we can conclude that the surrogate function accurately captures the scaling law of sensing uncertainty including reflecting the said glitches of monotonicity. 
\begin{figure}[t]
    \centering
    \subfigure[Observation DoF = 1]{\label{subfig:dim1}\includegraphics[width=0.44\textwidth]{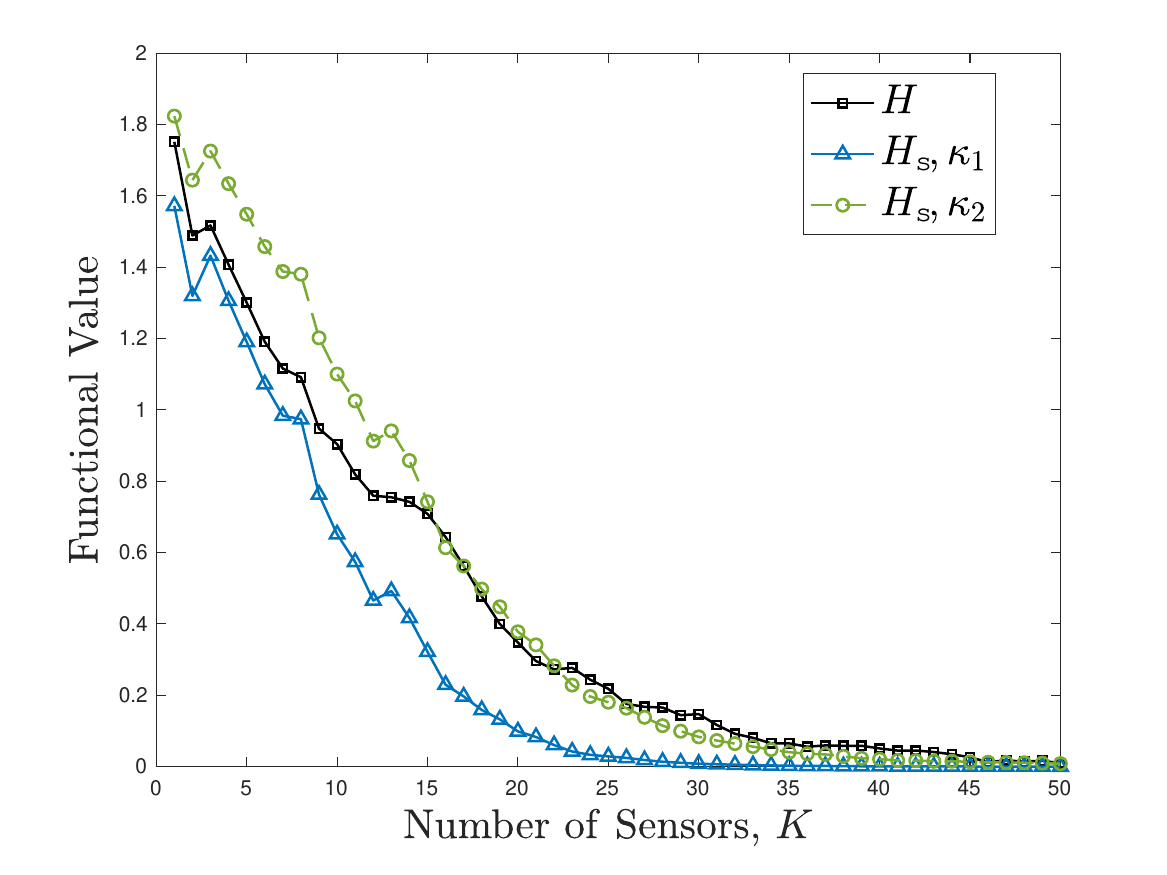}}
    \subfigure[Observation DoF = 2]{\label{subfig:dim2}\includegraphics[width=0.44\textwidth]{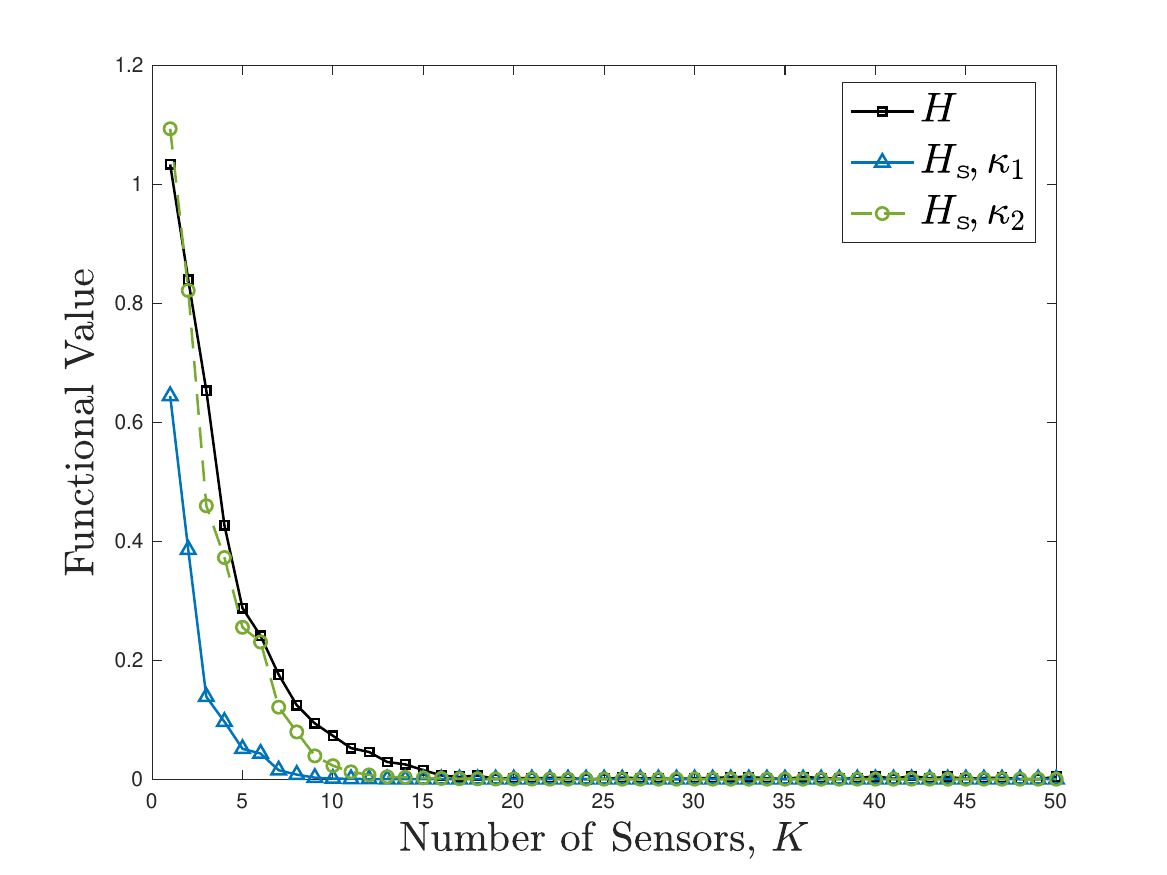}}
    \caption{Numerical validation on the surrogate sensing uncertainty, $\kappa_1 = \frac{1}{0.5M+1}$, $\kappa_2 =  \frac{1}{M+1}$.}
    \label{fig:approximation}
\end{figure}}
\end{Example}

\subsection{Aggregation Gain}
\subsubsection{Simplifying Uncertainty Surrogate}
To facilitate the analysis of multi-view aggregation gain, we simplify the expression of  uncertainty surrogate function in~\eqref{eq:surrogate} by Taylor expansion. First, define the average class separation distance as
\begin{equation}\label{eq:ave_separation}
    \bar{D} = \frac{1}{L(L-1)}\sum_{\ell=1}^L\sum_{\ell^{\prime}\neq \ell} D_{\ell,\ell^{\prime}},
\end{equation}
where $D_{\ell,\ell^{\prime}}$ is given in Lemma~\ref{Lemma:global_gain}. 
\begin{Proposition}[Surrogate Function Expansion]\label{Prop:surrogate}
    \emph{In the absence of channel noise, the surrogate function of sensing uncertainty in~\eqref{eq:surrogate} can be written as
    \begin{equation*}
        H_{\mathsf{s}}=\log \left[1+(L-1)\exp\left(-\kappa \bar{D}K\right)\right]  + C_b,
    \end{equation*}
    where $C_b = \mathcal{O}\left(\frac{1}{L(L-1)}\sum_{\ell}\sum_{\ell\neq\ell^{\prime}}(D_{\ell,\ell^{\prime}}-\bar{D})^2\right)$.}
\end{Proposition}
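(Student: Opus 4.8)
The plan is to treat $H_{\mathsf{s}}$ as a perturbation of its value in the ``balanced'' configuration where every pairwise Mahalanobis distance equals the global average $\bar{D}$. To this end I would introduce the deviations $\delta_{\ell,\ell'} \triangleq D_{\ell,\ell'}-\bar{D}$, which by the definition of $\bar{D}$ in~\eqref{eq:ave_separation} satisfy the key identity $\sum_{\ell}\sum_{\ell'\neq\ell}\delta_{\ell,\ell'}=0$. Writing $V\triangleq\frac{1}{L(L-1)}\sum_{\ell}\sum_{\ell'\neq\ell}\delta_{\ell,\ell'}^2$ for the variance appearing inside $C_b=\mathcal{O}(\cdot)$, the entire argument becomes an asymptotic expansion in the regime of small $V$ (class separations roughly equal), with every discarded term controlled by $V$.

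First I would factor each inner exponential as $\exp(-\kappa D_{\ell,\ell'}K)=a\exp(-\kappa K\delta_{\ell,\ell'})$ with $a\triangleq\exp(-\kappa\bar{D}K)$, and Taylor-expand the second factor to second order, $\exp(-\kappa K\delta_{\ell,\ell'})=1-\kappa K\delta_{\ell,\ell'}+\tfrac12\kappa^2K^2\delta_{\ell,\ell'}^2+r_{\ell,\ell'}$, retaining the Lagrange remainder $r_{\ell,\ell'}$. Summing over $\ell'\neq\ell$ puts the inner sum $S_\ell\triangleq\sum_{\ell'\neq\ell}\exp(-\kappa D_{\ell,\ell'}K)$ in the form $S_\ell=a\big[(L-1)-\kappa K\Sigma_\ell^{(1)}+\tfrac12\kappa^2K^2\Sigma_\ell^{(2)}+R_\ell\big]$, where $\Sigma_\ell^{(1)}=\sum_{\ell'\neq\ell}\delta_{\ell,\ell'}$, $\Sigma_\ell^{(2)}=\sum_{\ell'\neq\ell}\delta_{\ell,\ell'}^2$, and $R_\ell=\sum_{\ell'\neq\ell}r_{\ell,\ell'}$. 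I would then expand the outer logarithm about the common, $\ell$-independent anchor $u_0\triangleq 1+(L-1)a=1+(L-1)\exp(-\kappa\bar{D}K)$, which is the value of $1+S_\ell$ in the balanced case. Setting $1+S_\ell=u_0+v_\ell$ with $v_\ell=a[-\kappa K\Sigma_\ell^{(1)}+\tfrac12\kappa^2K^2\Sigma_\ell^{(2)}+R_\ell]$ and using $\log(u_0+v_\ell)=\log u_0+v_\ell/u_0-v_\ell^2/(2u_0^2)+\cdots$, averaging over $\ell$ gives
\begin{equation*}
H_{\mathsf{s}}=\log u_0+\frac{1}{Lu_0}\sum_{\ell}v_\ell-\frac{1}{2Lu_0^2}\sum_{\ell}v_\ell^2+\cdots,
\end{equation*}
whose leading term is exactly the claimed $\log\!\big[1+(L-1)\exp(-\kappa\bar{D}K)\big]$.

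The crux is to show that every remaining term is $\mathcal{O}(V)$. The dangerous first-order-in-$\delta$ contribution is $-\frac{\kappa Ka}{Lu_0}\sum_{\ell}\Sigma_\ell^{(1)}$; this vanishes identically because $\sum_{\ell}\Sigma_\ell^{(1)}=\sum_{\ell,\ell'\neq\ell}\delta_{\ell,\ell'}=0$. The surviving pieces are then: (i) the quadratic term $\frac{\kappa^2K^2a}{2Lu_0}\sum_{\ell}\Sigma_\ell^{(2)}$, which is a fixed multiple of $V$ since $\sum_{\ell}\Sigma_\ell^{(2)}=L(L-1)V$; (ii) the term $\sum_{\ell}v_\ell^2$, dominated by $a^2\kappa^2K^2\sum_{\ell}(\Sigma_\ell^{(1)})^2$, where Cauchy--Schwarz gives $(\Sigma_\ell^{(1)})^2\le(L-1)\Sigma_\ell^{(2)}$ and hence $\sum_{\ell}v_\ell^2=\mathcal{O}(V)$; and (iii) the Taylor remainders $R_\ell$ together with the higher-order logarithmic terms, which are of order $|\delta|^3$ or higher and thus also $\mathcal{O}(V)$. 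Folding (i)--(iii) into $C_b$ establishes $C_b=\mathcal{O}\!\big(\frac{1}{L(L-1)}\sum_{\ell}\sum_{\ell'\neq\ell}(D_{\ell,\ell'}-\bar{D})^2\big)$.

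The main obstacle I anticipate is the interaction of the two nonlinearities. The first-order deviation does \emph{not} cancel inside any single $\log(1+S_\ell)$, because the per-class local average $\bar{D}_\ell=\bar{D}+\frac{1}{L-1}\Sigma_\ell^{(1)}$ generally differs from the global $\bar{D}$; the cancellation is a genuinely global effect that only appears after summing over $\ell$, and it must be extracted \emph{before} passing through the logarithm. The decisive device is to expand the log about the single $\ell$-independent anchor $u_0$ rather than about each $1+(L-1)a_\ell$, since this makes the linear term $\sum_\ell v_\ell$ the carrier of the cancellation. The residual bookkeeping burden is to bound the Lagrange remainders so that each discarded contribution is provably controlled by the second moment $V$ and not merely by the first absolute moment of $\{\delta_{\ell,\ell'}\}$.
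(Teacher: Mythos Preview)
Your proposal is correct and follows essentially the same approach as the paper: a Taylor expansion of $H_{\mathsf{s}}$ about the balanced configuration $D_{\ell,\ell'}\equiv\bar{D}$, with the first-order term vanishing by the zero-sum identity $\sum_{\ell}\sum_{\ell'\neq\ell}(D_{\ell,\ell'}-\bar{D})=0$ and the remainder absorbed into $C_b=\mathcal{O}(V)$. The only presentational difference is that the paper performs a single multivariate expansion of $U_\kappa(K,\mathbf{u})$ in the vector $\mathbf{u}=(D_{\ell,\ell'})$ around $\bar{D}\mathbf{1}$ (so the constant gradient at the center kills the linear term directly), whereas you decompose this into a nested two-stage expansion of the inner exponentials and then the outer logarithm; the mechanisms and conclusions are identical.
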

\begin{proof}
    (See Appendix~\ref{Apdx:relaxed_up}).
\end{proof}

In Proposition~\ref{Prop:surrogate}, the residual term $C_b$ is negligible when  between-class differentiability is similar, i.e., 
\begin{equation}\label{eq:appromation_condition}
    D_{\ell,\ell^{\prime}} \approx  D_{l,l^{\prime}},\ \forall \ell,\ell^{\prime},l,l^{\prime}. 
\end{equation}
To simplify notation, we assume that this is the case and hence  $C_b\approx 0$ in the subsequent analysis. As a result, the sensing uncertainty surrogate  reduces to 
\begin{equation}\label{eq:app_surrogate}
    H_{\mathsf{s}} \approx \log \left[1+(L-1)\exp\left(-\kappa \bar{D}K\right)\right].
\end{equation}
It is worth mentioning that our following analysis also holds for the case of $C_b\neq 0$ which, however, complicates  notation and makes analysis tedious. 

Based on ~\eqref{eq:app_surrogate}, $H_{\mathsf{s}}$ is observed to be a monotonically decreasing function of the product of the views' number and the average differentiability, say $K\cdot\bar{D}$. The product reflects two aspects of multi-view aggregation gain. On one hand, as $K$ grows, the aggregation over more views suppresses the variances of sensing data clusters and thereby decreases the sensing uncertainty. On the other hand, a growing number of suitably scheduled sensors also enhances the average class differentiability represented by $\bar{D}$ as elaborated shortly.

\subsubsection{Characterizing Average Class Separation Distance}
By substituting Lemma~\ref{Lemma:global_gain} into~\eqref{eq:ave_separation},  $\bar{D}$  can be written as
\begin{equation}\label{eq:average_gain_1}
    \begin{aligned}
         \bar{D} = \mathsf{Tr}(\bar{\mathbf{P}}\mathbf{C}^{-1}\bar{\mathbf{P}}\mathbf{D}),
    \end{aligned}
\end{equation}
where $\mathbf{D} = \frac{1}{L(L-1)}\sum_{\ell=1}^{L}\sum_{\ell^{\prime}\neq \ell} \left(\bmu_{\ell}-\bmu_{\ell^{\prime}}\right)\left(\bmu_{\ell}-\bmu_{\ell^{\prime}}\right)^{\top}$ reflects the average of pairwise class separation matrices, $\{\left(\bmu_{\ell}-\bmu_{\ell^{\prime}}\right)\left(\bmu_{\ell}-\bmu_{\ell^{\prime}}\right)^{\top}\}$. Therefore, $\bar{D}$ measures the components of $\mathbf{D}$ projected onto the subspace spanned by $\bar{\mathbf{P}}$. The  global  (multi-view) observation matrix $\bar{\mathbf{P}}$ has a larger rank than each of its local components  $\{\mathbf{P}_k\}$. However, including more views/sensors does not necessarily increase the value $\bar{D}$. It may even decrease by the addition of a sensor contributing little useful information in its observation sub-space, as observed in  Example 1. This suggests the need of designing a sensor scheduler  using the criterion of maximizing $\bar{D}$ when the views are limited.  

\subsubsection{Main Result}
With sufficient independent views in aggregation, the average of random observation matrices $\bar{\mathbf{P}}$  converges to its expectation, and thus the trace value $\mathsf{Tr}(\bar{\mathbf{P}}\mathbf{C}^{-1}\bar{\mathbf{P}}\mathbf{D})$ in~\eqref{eq:average_gain_1} reduces into a constant given as: 
\begin{equation}\label{eq:converged_const}
     \xi\overset{\triangle}{=}\mathsf{Tr}(\mathsf{E}[\mathbf{P}_k]\mathbf{C}^{-1}\mathsf{E}[\mathbf{P}_k]\mathbf{D}).
\end{equation}
Then the sensing uncertainty $H_{\mathsf{s}}$ in~\eqref{eq:app_surrogate} converges to 
\begin{equation}
H_{\mathsf{s}}=\log \left[1+(L-1)\exp\left(-\kappa \xi K\right)\right]. 
\end{equation}
Then the main result of this section is obtained as follows. 
\begin{MR}[View Aggregation Gain]\label{MR:asymptotic}
    \emph{For a large number of views, the sensing uncertainty with noiseless multi-view aggregation is exponentially decreasing w.r.t. $K$: 
    \begin{equation}\label{eq:mian_result_1}
       \boxed{\begin{aligned}
            H_{\mathsf{s}} &\approx (L-1)\exp(-\kappa \xi K), \quad\quad K\gg 1,
       \end{aligned}}
    \end{equation}
    where $\kappa$ can be $\frac{1}{2}$ and $\frac{1}{cM+2}$ according to the definition in~\eqref{eq:surrogate} and $\xi$ follows~\eqref{eq:converged_const}.}
\end{MR}
% ---------------------------------------------------------------
%  Section: View-channel diversity
% 
% ---------------------------------------------------------------
\section{Multi-view Aggregation Gain
with Channel Distortion}\label{section:analysis_2}
In this section, we build on the results in the preceding section to quantify the view-and-channel aggregation gain for the  E2E sensing performance of an ISEA system with AirComp over fading channels. In particular, the results on the distribution of aggregated features and  surrogate functions for sensing uncertainty are extended to account for channel distortion. We further obtain useful results on receive SNR distribution and channel-induced sensing-accuracy loss. 

\subsection{Aggregated Feature Distribution}
First, as the transmitted  feature maps are real, the real part of the aggregated feature map $\tilde{\mathbf{f}}$ is extracted as $\tilde{\mathbf{f}}^{\mathsf{re}} = \Re\{\tilde{\mathbf{f}}\}$. The distribution of $\tilde{\mathbf{f}}^{\mathsf{re}}$ is derived as shown below.
\begin{Lemma}\label{Lemma:aircomp_feature}
    \emph{Given AirComp in~\eqref{eq:post_process}, the aggregated feature map follows the Gaussian-mixture distribution: 
    \begin{equation*}
    \tilde{\mathbf{f}}^{\mathsf{re}}\sim\frac{1}{L}\sum_{\ell=1}^{L}\mathcal{N}\left(\bar{\mathbf{P}}\bmu_{\ell},\frac{1}{K}\mathbf{C} + \frac{1}{\gamma_{\mathsf{air}}}\mathbf{I}_M\right),
    \end{equation*}
    where $\gamma_{\mathsf{air}}\overset{\triangle}{=}\frac{2K^2}{\sigma^2\Vert\mathbf{b}\Vert^2}$ denotes the \emph{effective receive SNR}.}
\end{Lemma}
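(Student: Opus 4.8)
The plan is to treat $\tilde{\mathbf{f}}^{\mathsf{re}}$ as the sum of the noiseless aggregated feature $\bar{\mathbf{f}}$, whose per-class law is already pinned down by Lemma~\ref{Lemma:pooled_feature}, and the real part of an independent additive channel-noise vector, and then to convolve the two Gaussian laws. Starting from the AirComp post-processing identity $\tilde{\mathbf{f}} = \bar{\mathbf{f}} + \frac{1}{K}\mathbf{Z}^H\mathbf{b}$ in~\eqref{eq:post_process}, the only new object to analyze is the noise term $\frac{1}{K}\mathbf{Z}^H\mathbf{b}$. I would work conditionally on the channel realization (equivalently, on the combiner $\mathbf{b}$), noting that the additive noise $\mathbf{Z}$ is independent of $\mathbf{H}$ and hence of $\mathbf{b}$, so that each column of $\mathbf{Z}$ remains $\mathcal{CN}(\mathbf{0},\sigma^2\mathbf{I}_N)$ after conditioning. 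The resulting per-class law is therefore a conditional one, and $\gamma_{\mathsf{air}}$ inherits the channel randomness through $\Vert\mathbf{b}\Vert^2$.

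First I would characterize the law of $\mathbf{Z}^H\mathbf{b}$. Its $t$-th entry is the linear form $\mathbf{z}_t^H\mathbf{b}$ in the i.i.d. columns $\mathbf{z}_t\sim\mathcal{CN}(\mathbf{0},\sigma^2\mathbf{I}_N)$, hence a zero-mean complex Gaussian with variance $\mathsf{E}[|\mathbf{z}_t^H\mathbf{b}|^2] = \mathbf{b}^H\mathsf{E}[\mathbf{z}_t\mathbf{z}_t^H]\mathbf{b} = \sigma^2\Vert\mathbf{b}\Vert^2$; since the columns are independent across the $M$ time slots, the entries are mutually independent, giving $\mathbf{Z}^H\mathbf{b}\sim\mathcal{CN}(\mathbf{0},\sigma^2\Vert\mathbf{b}\Vert^2\mathbf{I}_M)$ and thus $\frac{1}{K}\mathbf{Z}^H\mathbf{b}\sim\mathcal{CN}(\mathbf{0},\frac{\sigma^2\Vert\mathbf{b}\Vert^2}{K^2}\mathbf{I}_M)$. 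Next I would take the real part: for a circularly symmetric $\mathcal{CN}(0,\tau^2)$ the real and imaginary parts are independent $\mathcal{N}(0,\tau^2/2)$, so that $\Re\{\frac{1}{K}\mathbf{Z}^H\mathbf{b}\}\sim\mathcal{N}(\mathbf{0},\frac{\sigma^2\Vert\mathbf{b}\Vert^2}{2K^2}\mathbf{I}_M)$.

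It remains to combine the two pieces. Because the transmitted features, and hence $\bar{\mathbf{f}}$, are real, one has $\tilde{\mathbf{f}}^{\mathsf{re}} = \bar{\mathbf{f}} + \Re\{\frac{1}{K}\mathbf{Z}^H\mathbf{b}\}$. Conditioning on class $\ell$, Lemma~\ref{Lemma:pooled_feature} gives $\bar{\mathbf{f}}\sim\mathcal{N}(\bar{\mathbf{P}}\bmu_\ell,\frac{1}{K}\mathbf{C})$; since the channel noise is independent of the sensing data, their covariances add, yielding $\tilde{\mathbf{f}}^{\mathsf{re}}\mid\ell\sim\mathcal{N}(\bar{\mathbf{P}}\bmu_\ell,\frac{1}{K}\mathbf{C}+\frac{\sigma^2\Vert\mathbf{b}\Vert^2}{2K^2}\mathbf{I}_M)$. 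Identifying $\frac{\sigma^2\Vert\mathbf{b}\Vert^2}{2K^2}=\frac{1}{\gamma_{\mathsf{air}}}$ from the definition $\gamma_{\mathsf{air}}\overset{\triangle}{=}\frac{2K^2}{\sigma^2\Vert\mathbf{b}\Vert^2}$ and averaging over the uniform class prior $\frac{1}{L}$ then delivers the claimed mixture.

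The step I expect to require the most care is not the Gaussian convolution but the bookkeeping of the factor $\tfrac{1}{2}$: correctly passing from the complex covariance $\sigma^2\Vert\mathbf{b}\Vert^2$ to the real-part covariance $\frac{\sigma^2\Vert\mathbf{b}\Vert^2}{2}$ via circular symmetry is exactly what produces the $2K^2$ in the numerator of $\gamma_{\mathsf{air}}$, and mishandling it would misscale the effective SNR. A secondary point worth stating explicitly is that the result is a conditional law given the channel: $\mathbf{b}$, and therefore $\Vert\mathbf{b}\Vert^2$ and $\gamma_{\mathsf{air}}$, are random, but the independence of $\mathbf{Z}$ from $\mathbf{H}$ is what legitimizes freezing $\mathbf{b}$ throughout the variance computation.
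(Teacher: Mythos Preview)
Your proposal is correct and follows essentially the same approach as the paper: decompose $\tilde{\mathbf{f}}^{\mathsf{re}}$ into $\bar{\mathbf{f}}$ plus the real part of the independent channel-noise term $\frac{1}{K}\mathbf{Z}^H\mathbf{b}$, compute the latter's covariance as $\frac{\sigma^2\Vert\mathbf{b}\Vert^2}{2K^2}\mathbf{I}_M$ via circular symmetry, and add it to the per-class covariance $\frac{1}{K}\mathbf{C}$ coming from Lemma~\ref{Lemma:pooled_feature}. The only cosmetic difference is that the paper re-expands $\bar{\mathbf{f}}=\bar{\mathbf{P}}\mathbf{g}+\frac{1}{K}\sum_k\mathbf{w}_k$ and computes the covariance of the combined noise $\frac{1}{K}\sum_k\mathbf{w}_k+\frac{1}{K}\Re\{\mathbf{Z}^H\mathbf{b}\}$ in one shot rather than invoking Lemma~\ref{Lemma:pooled_feature} as a black box, but the underlying argument and the handling of the factor $\tfrac{1}{2}$ are identical.
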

\begin{proof}
    (See Appendix~\ref{Apdx:aircomp_feature}).
\end{proof}

It follows from Lemma~\ref{Lemma:aircomp_feature} and~\eqref{eq:ML_classifier} that the optimal ML classifier in the current case is given as
\begin{equation}\label{eq:revised_classifier}
\ell^{\star}=\arg\min_{\ell}\ (\tilde{\mathbf{f}}^{\mathsf{re}}-\bar{\mathbf{P}}\bmu_{\ell})^{\top}\left(\frac{1}{K}\mathbf{C} + \frac{1}{\gamma_{\mathsf{air}}}\mathbf{I}_M\right)^{-1}(\tilde{\mathbf{f}}^{\mathsf{re}}-\bar{\mathbf{P}}\bmu_{\ell}).
\end{equation}

\subsection{Surrogate Function for Sensing Uncertainty}
Given the similar forms of aggregated feature distributions in Lemmas~\ref{Lemma:pooled_feature} and~\ref{Lemma:aircomp_feature}, the needed surrogate function can be derived similarly as its noiseless counterpart in Section~\ref{section:analysis_1}. First, based on Lemma~\ref{Lemma:global_gain} and~\ref{Lemma:aircomp_feature}, the global discrimination gain of $\tilde{\mathbf{f}}^{\mathsf{re}}$ is given as $\tilde{G}_{\ell,\ell^{\prime}}= K\tilde{D}_{\ell,\ell^{\prime}}(\gamma_{\mathsf{air}})$ where  the average class separation distance with channel distortion is given as
\begin{equation}\label{eq:revised_gain}
\tilde{D}_{\ell,\ell^{\prime}}(\gamma_{\mathsf{air}}) = \left(\bmu_{\ell}-\bmu_{\ell^{\prime}}\right)^{\top}\bar{\mathbf{P}}(\mathbf{C} + \frac{K}{\gamma_{\mathsf{air}}}\mathbf{I}_M)^{-1}\bar{\mathbf{P}}\left(\bmu_{\ell}-\bmu_{\ell^{\prime}}\right).
\end{equation}
One can observe that the effect of channel distortion  on the discrimination gain is regulated by a single parameter -- the effective receive SNR $\gamma_{\mathsf{air}}$. Using the above result and following the procedure as deriving Proposition~\ref{Prop:uncertainty}, the sensing uncertainty with channel distortion can be bounded as follows. 

\begin{Corollary}[Sensing Uncertainty with Channel Distortion]\label{Corollary:aircomp_uncertainty}
    \emph{Consider the ISEA system with linear classification and AirComp.  The resulting sensing uncertainty can be  bounded as
    \begin{equation*}
    \begin{aligned}
        &\frac{1}{L}\sum_{\ell=1}^{L}\log\left[1+\sum_{\ell^{\prime}\neq\ell}\exp\left(-\frac{\tilde{D}_{\ell,\ell^{\prime}}(\gamma_{\mathsf{air}})}{2}K\right)\right]\leq H(\gamma_{\mathsf{air}})\leq\\
        &\frac{1}{L}\sum_{\ell=1}^{L}\log\left[1+\sum_{\ell^{\prime}\neq\ell}\exp\left(-\frac{\tilde{D}_{\ell,\ell^{\prime}}(\gamma_{\mathsf{air}})}{cM+2}K\right)\right] +  C_a,
    \end{aligned}
    \end{equation*}
    where $c>0$, $C_a$ is the constant same as in Proposition~\ref{Prop:uncertainty}, and $\tilde{D}_{\ell,\ell^{\prime}}(\gamma_{\mathsf{air}})$ is given in~\eqref{eq:revised_gain}.}
\end{Corollary}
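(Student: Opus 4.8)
The plan is to obtain the statement as a direct consequence of Proposition~\ref{Prop:uncertainty}, by recognizing that its proof never uses the specific form of the common cluster covariance and then applying the substitution dictated by Lemma~\ref{Lemma:aircomp_feature}. Here $\gamma_{\mathsf{air}}$ is treated as a fixed parameter, so no channel averaging is involved at this stage.

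First I would isolate what the proof of Proposition~\ref{Prop:uncertainty} actually consumes from Lemma~\ref{Lemma:pooled_feature}. It uses only two structural facts: (i) the aggregated feature is an equally weighted $L$-component Gaussian mixture over $\mathbb{R}^M$ whose components share a \emph{single} common covariance, with the $\ell$-th mean equal to $\bar{\mathbf{P}}\bmu_\ell$; and (ii) pairwise separability enters solely through the squared Mahalanobis distance $(\bar{\mathbf{P}}\bmu_\ell-\bar{\mathbf{P}}\bmu_{\ell'})^{\top}\boldsymbol{\Sigma}^{-1}(\bar{\mathbf{P}}\bmu_\ell-\bar{\mathbf{P}}\bmu_{\ell'})$ with $\boldsymbol{\Sigma}=\tfrac{1}{K}\mathbf{C}$, equivalently the discrimination gain $G_{\ell,\ell'}=KD_{\ell,\ell'}$. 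Importantly, the feature dimension enters the upper bound only via the factor $cM+2$, and the additive constant $C_a=\log\frac{ce^{1/c}}{1+c}$ depends on the free parameter $c$ alone, not on $\boldsymbol{\Sigma}$. Hence, as long as the mixture keeps this form with a positive-definite common covariance over the same $M$-dimensional real feature space and the ML rule keeps the quadratic form of~\eqref{eq:revised_classifier}, the entire chain of inequalities carries over verbatim with $D_{\ell,\ell'}$ replaced by the appropriate effective distance.

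Second I would supply the substitution. Lemma~\ref{Lemma:aircomp_feature} gives $\tilde{\mathbf{f}}^{\mathsf{re}}$ as an equally weighted $L$-component Gaussian mixture over $\mathbb{R}^M$ with means $\bar{\mathbf{P}}\bmu_\ell$ and common covariance $\boldsymbol{\Sigma}_{\mathsf{eff}}=\tfrac{1}{K}\mathbf{C}+\tfrac{1}{\gamma_{\mathsf{air}}}\mathbf{I}_M$, which is positive definite since $\mathbf{C}\succ\mathbf{0}$. Writing $\boldsymbol{\Sigma}_{\mathsf{eff}}=\tfrac{1}{K}\bigl(\mathbf{C}+\tfrac{K}{\gamma_{\mathsf{air}}}\mathbf{I}_M\bigr)$ so that $\boldsymbol{\Sigma}_{\mathsf{eff}}^{-1}=K\bigl(\mathbf{C}+\tfrac{K}{\gamma_{\mathsf{air}}}\mathbf{I}_M\bigr)^{-1}$, the effective discrimination gain evaluates to
\begin{equation*}
(\bar{\mathbf{P}}\bmu_\ell-\bar{\mathbf{P}}\bmu_{\ell'})^{\top}\boldsymbol{\Sigma}_{\mathsf{eff}}^{-1}(\bar{\mathbf{P}}\bmu_\ell-\bar{\mathbf{P}}\bmu_{\ell'})=K\,(\bmu_\ell-\bmu_{\ell'})^{\top}\bar{\mathbf{P}}\bigl(\mathbf{C}+\tfrac{K}{\gamma_{\mathsf{air}}}\mathbf{I}_M\bigr)^{-1}\bar{\mathbf{P}}(\bmu_\ell-\bmu_{\ell'})=K\tilde{D}_{\ell,\ell'}(\gamma_{\mathsf{air}}),
\end{equation*}
which is exactly $\tilde{G}_{\ell,\ell'}$ and matches~\eqref{eq:revised_gain}. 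Therefore the role played by $D_{\ell,\ell'}$ in Proposition~\ref{Prop:uncertainty} is now played by $\tilde{D}_{\ell,\ell'}(\gamma_{\mathsf{air}})$, and inserting it into the two bounds of Proposition~\ref{Prop:uncertainty} reproduces the claimed inequalities, keeping the same constant $C_a$ and the same $M$-dependence through $cM+2$.

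The main obstacle is not any single calculation but the verification underlying the first paragraph: confirming that no step in the proof of Proposition~\ref{Prop:uncertainty} secretly exploits special structure of $\tfrac{1}{K}\mathbf{C}$. The critical property is that all $L$ clusters share one covariance, since this is what lets the symmetric-KL divergence collapse to a single quadratic form and makes the surrogate-function machinery applicable. This is precisely what Lemma~\ref{Lemma:aircomp_feature} guarantees: the AirComp noise $\tfrac{1}{K}\mathbf{Z}^{H}\mathbf{b}$, after taking the real part, contributes an isotropic term $\tfrac{1}{\gamma_{\mathsf{air}}}\mathbf{I}_M$ that is identical across classes and independent of the sensing noise, so the common-covariance structure is preserved. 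Once this is checked, the remainder is purely mechanical substitution.
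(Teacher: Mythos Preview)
Your proposal is correct and is essentially the paper's own argument: the paper states that, given the similar GMM forms in Lemmas~\ref{Lemma:pooled_feature} and~\ref{Lemma:aircomp_feature}, the bounds follow ``by following the procedure as deriving Proposition~\ref{Prop:uncertainty}'' with the effective covariance $\tfrac{1}{K}\mathbf{C}+\tfrac{1}{\gamma_{\mathsf{air}}}\mathbf{I}_M$ in place of $\tfrac{1}{K}\mathbf{C}$, which is exactly the substitution you carry out to obtain $G_{\ell,\ell'}\to K\tilde{D}_{\ell,\ell'}(\gamma_{\mathsf{air}})$.
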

It follows that the noisy counterpart of the uncertainty  surrogate function in~\eqref{eq:app_surrogate} can be obtained as
\begin{equation}\label{eq:aircomp_uncertainty}
    H_{\mathsf{s}}(\gamma_{\mathsf{air}})= \log \left[1+(L-1)\exp\left(-\kappa \tilde{D}(\gamma_{\mathsf{air}}) K\right)\right],
\end{equation}
    where $ \tilde{D}(\gamma_{\mathsf{air}}) \overset{\triangle}{=}\frac{1}{L(L-1)}\sum_{\ell=1}^L\sum_{\ell^{\prime}\neq \ell} \tilde{D}_{\ell,\ell^{\prime}}(\gamma_{\mathsf{air}})= \mathsf{Tr}(\bar{\mathbf{P}}(\mathbf{C} + \frac{K}{\gamma_{\mathsf{air}}}\mathbf{I}_M)^{-1}\bar{\mathbf{P}}\mathbf{D})$.

\subsection{Distribution of Effective Receive SNR}
The dependence of sensing uncertainty on the effective receive SNR, $\gamma_{\mathsf{air}}$, as reflected in~\eqref{eq:aircomp_uncertainty} suggests the need of analyzing its distribution, which is carried out as follows.

\begin{Lemma}\label{Lemma:monotonical}
    \emph{The sensing uncertainty function $H_{\mathsf{s}}(\gamma_{\mathsf{air}})$ in~\eqref{eq:aircomp_uncertainty} is monotonically decreasing w.r.t. $\gamma_{\mathsf{air}}$.}
\end{Lemma}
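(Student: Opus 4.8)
The plan is to strip away everything in $H_{\mathsf{s}}(\gamma_{\mathsf{air}})$ except the scalar $\tilde D(\gamma_{\mathsf{air}})$ through which the SNR dependence flows, and then to establish that $\tilde D$ is increasing in $\gamma_{\mathsf{air}}$ using the Löwner ordering of the resolvent $(\mathbf{C}+\tfrac{K}{\gamma_{\mathsf{air}}}\mathbf{I}_M)^{-1}$. First I would note that in~\eqref{eq:aircomp_uncertainty} the entire $\gamma_{\mathsf{air}}$-dependence is funneled through $\tilde D(\gamma_{\mathsf{air}})$, and since $L\ge 2$, $\kappa>0$, and $K>0$, the outer map $x\mapsto\log[1+(L-1)e^{-\kappa K x}]$ is strictly decreasing in $x$. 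By the chain rule it therefore suffices to prove that $\tilde D(\gamma_{\mathsf{air}})$ is monotonically \emph{increasing} in $\gamma_{\mathsf{air}}$; the sign flip from the outer map then yields the claimed monotone decrease of $H_{\mathsf{s}}$.

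Next I would recast $\tilde D$ into a form amenable to matrix-order arguments. Using that each $\mathbf{P}_k$ (hence $\bar{\mathbf{P}}$) is a symmetric projection matrix — which is precisely what makes the discrimination gain in~\eqref{eq:discrimination_gain} take the stated form $\mathbf{P}_k\mathbf{C}^{-1}\mathbf{P}_k$ — together with the cyclic property of the trace, I would write
\[
\tilde D(\gamma_{\mathsf{air}}) = \mathsf{Tr}\!\left(\Big(\mathbf{C}+\tfrac{K}{\gamma_{\mathsf{air}}}\mathbf{I}_M\Big)^{-1}\mathbf{A}\right),\qquad \mathbf{A}\overset{\triangle}{=}\bar{\mathbf{P}}\,\mathbf{D}\,\bar{\mathbf{P}}.
\]
Because $\mathbf{D}$ is a nonnegative combination of outer products $(\bmu_\ell-\bmu_{\ell^\prime})(\bmu_\ell-\bmu_{\ell^\prime})^{\top}$ and $\bar{\mathbf{P}}$ is symmetric, $\mathbf{A}$ is symmetric positive semidefinite, since $\mathbf{v}^{\top}\mathbf{A}\mathbf{v}=(\bar{\mathbf{P}}\mathbf{v})^{\top}\mathbf{D}(\bar{\mathbf{P}}\mathbf{v})\ge 0$.

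Then I would invoke two standard facts. First, if $\gamma_1<\gamma_2$ then $\tfrac{K}{\gamma_1}>\tfrac{K}{\gamma_2}$, so $\mathbf{C}+\tfrac{K}{\gamma_1}\mathbf{I}_M\succeq\mathbf{C}+\tfrac{K}{\gamma_2}\mathbf{I}_M$, and by operator monotonicity of the matrix inverse on positive-definite matrices, $(\mathbf{C}+\tfrac{K}{\gamma_1}\mathbf{I}_M)^{-1}\preceq(\mathbf{C}+\tfrac{K}{\gamma_2}\mathbf{I}_M)^{-1}$. Second, for symmetric $\mathbf{B}_1\preceq\mathbf{B}_2$ and any PSD $\mathbf{A}$, one has $\mathsf{Tr}(\mathbf{B}_1\mathbf{A})\le\mathsf{Tr}(\mathbf{B}_2\mathbf{A})$, because $\mathsf{Tr}((\mathbf{B}_2-\mathbf{B}_1)\mathbf{A})=\mathsf{Tr}(\mathbf{A}^{1/2}(\mathbf{B}_2-\mathbf{B}_1)\mathbf{A}^{1/2})\ge 0$. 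Combining the two gives $\tilde D(\gamma_1)\le\tilde D(\gamma_2)$, so $\tilde D$ is increasing in $\gamma_{\mathsf{air}}$. Equivalently, a one-line differentiation in $t=K/\gamma_{\mathsf{air}}$ gives $\tfrac{d}{dt}\mathsf{Tr}((\mathbf{C}+t\mathbf{I}_M)^{-1}\mathbf{A})=-\mathsf{Tr}((\mathbf{C}+t\mathbf{I}_M)^{-2}\mathbf{A})\le 0$, and $t$ decreases in $\gamma_{\mathsf{air}}$, yielding the same conclusion.

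The argument is essentially routine; the only point requiring care is the positive-semidefinite structure that feeds the trace inequality. The main obstacle is therefore making explicit the symmetry of $\bar{\mathbf{P}}$ — so that $\mathbf{A}=\bar{\mathbf{P}}\mathbf{D}\bar{\mathbf{P}}$ is genuinely PSD rather than merely similar to a PSD matrix — and confirming $\mathbf{C}\succ 0$ so that the resolvent and its operator-monotone inverse are well defined. Both hold under the model's standing assumptions (orthogonal-projection observation matrices and a non-degenerate sensing-noise covariance), after which the monotonicity follows immediately.
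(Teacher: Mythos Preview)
Your proposal is correct and follows essentially the same route as the paper: reduce the claim to the monotone increase of $\tilde D(\gamma_{\mathsf{air}})$, then exploit that $\bar{\mathbf{P}}\mathbf{D}\bar{\mathbf{P}}$ is positive semidefinite so that the trace of the resolvent against it is monotone. The paper carries this out by directly differentiating in $\gamma_{\mathsf{air}}$ to obtain $\frac{K}{\gamma_{\mathsf{air}}^2}\mathsf{Tr}\big(\mathbf{B}\,\bar{\mathbf{P}}\mathbf{D}\bar{\mathbf{P}}\,\mathbf{B}\big)\ge 0$ with $\mathbf{B}=(\mathbf{C}+\tfrac{K}{\gamma_{\mathsf{air}}}\mathbf{I}_M)^{-1}$, which is exactly your one-line differentiation argument after the change of variable $t=K/\gamma_{\mathsf{air}}$; your additional L\"owner-order phrasing is an equivalent packaging of the same inequality.
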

\begin{proof}
    (See Appendix~\ref{Apdx:monotonical}).
\end{proof}
Under the power constraint $\frac{\nu^2}{\mathbf{b}^H\mathbf{h}_k\mathbf{h}_k^H\mathbf{b}}\leq P$, the effective receive SNR maximized by optimal beamforming is given as~\cite{GXZhuAirComp2019}
\begin{equation}\label{eq:opt_norm}
    \gamma_{\mathsf{air}} = \frac{2K^2\gamma}{\nu^2}\cdot \min_k\ \mathbf{v}^H\mathbf{h}_k\mathbf{h}_k^H\mathbf{v},
\end{equation}
where $\mathbf{v}$ denotes the optimal receive beamformer computed as the first eigenvector of the channel matrix $\mathbf{H}=[\mathbf{h}_1,\mathbf{h}_2,\cdots,\mathbf{h}_K]$. As revealed in~\eqref{eq:opt_norm}, $\gamma_{\mathsf{air}}$ is limited by the weakest link due to signal magnitude alignment of AirComp. The alignment term, $\min_k\ \mathbf{v}^H\mathbf{h}_k\mathbf{h}_k^H\mathbf{v}$, will reduce to zero as $K\rightarrow \infty$. However, its value scaled up proportional to $K$, say $\zeta_{\mathsf{air}} \overset{\triangle}{=}K\cdot\min_k\ \mathbf{v}^H\mathbf{h}_k\mathbf{h}_k^H\mathbf{v}$, can converge to an exponential random variable with a fixed parameter. This gives the distribution of $\gamma_{\mathsf{air}}$ as follows.

\begin{Lemma}[Asymptotic Distribution of Effective Receive SNR]\label{Lemma:asymp_dist}
    \emph{For a large number of sensors ($K\rightarrow \infty$) and a proportional array size $N = \omega K$ with $\omega\in(0,\infty)$, the effective receive SNR resulting from AirComp $\gamma_{\mathsf{air}} = 2K\frac{\gamma}{\nu^2}\cdot\zeta_{\mathsf{air}}$ with $\zeta_{\mathsf{air}}$ being an exponential random variable:
    \begin{equation*}
        \zeta_{\mathsf{air}}\sim \mathrm{Exp}\left(\frac{1}{(1+\sqrt{\omega})^2}\right).
    \end{equation*}}
\end{Lemma}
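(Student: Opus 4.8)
The plan is to reduce the minimum of \emph{correlated} projections to a minimum of \emph{independent} exponentials by combining the singular value decomposition (SVD) of the channel matrix with the unitary invariance of the complex Gaussian ensemble. Write $\mathbf{H}=\mathbf{U}\mathbf{\Sigma}\mathbf{W}^H$, so that the optimal beamformer in~\eqref{eq:opt_norm} is the leading left singular vector $\mathbf{v}=\mathbf{u}_1$. A direct computation gives $\mathbf{v}^H\mathbf{h}_k=\sigma_1\overline{(\mathbf{w}_1)_k}$, whence
\begin{equation*}
\zeta_{\mathsf{air}} = K\min_k\ \mathbf{v}^H\mathbf{h}_k\mathbf{h}_k^H\mathbf{v} = K\lambda_{\max}\min_k|(\mathbf{w}_1)_k|^2,
\end{equation*}
where $\lambda_{\max}=\sigma_1^2$ is the top eigenvalue of $\mathbf{H}\mathbf{H}^H$ and $\mathbf{w}_1$ is the associated right singular vector. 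This identity is the crux: it converts the geometry-heavy quantity $\min_k|\mathbf{v}^H\mathbf{h}_k|^2$, in which $\mathbf{v}$ is correlated with every $\mathbf{h}_k$, into the product of a spectral term and the minimum squared entry of a single unit vector.

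Next I would pin down the two factors separately. First, since the law of $\mathbf{H}$ is invariant under $\mathbf{H}\mapsto\mathbf{U}_0\mathbf{H}\mathbf{V}_0^H$ for fixed unitaries, the right singular-vector matrix $\mathbf{W}$ is Haar-distributed on the unitary group and independent of the singular values. Hence $\mathbf{w}_1$ is uniform on the unit sphere of $\mathbb{C}^K$ and independent of $\lambda_{\max}$, so I may represent it as $\mathbf{w}_1=\mathbf{g}/\|\mathbf{g}\|$ with $\mathbf{g}\sim\mathcal{CN}(\mathbf{0},\mathbf{I}_K)$. Then $\min_k|(\mathbf{w}_1)_k|^2=\min_k|g_k|^2/\|\mathbf{g}\|^2$, where the $|g_k|^2$ are i.i.d.\ unit-mean exponentials; consequently $\min_k|g_k|^2$ is \emph{exactly} $\mathrm{Exp}(K)$ and $K\min_k|g_k|^2\sim\mathrm{Exp}(1)$ for every $K$. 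Second, the largest eigenvalue of a complex Gaussian (Wishart-type) matrix with aspect ratio $N/K=\omega$ obeys the Bai--Yin/Marchenko--Pastur edge law, giving $\lambda_{\max}/K\to(1+\sqrt{\omega})^2$ almost surely.

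Assembling the pieces, I would rewrite
\begin{equation*}
\zeta_{\mathsf{air}} = \underbrace{\frac{\lambda_{\max}}{K}}_{\to\,(1+\sqrt{\omega})^2}\cdot\underbrace{\frac{K}{\|\mathbf{g}\|^2}}_{\to\,1}\cdot\underbrace{K\min_k|g_k|^2}_{\sim\,\mathrm{Exp}(1)},
\end{equation*}
where the first factor converges almost surely by the edge law, the second converges to $1$ in probability by the law of large numbers, and the third is exactly unit-rate exponential. By Slutsky's theorem the product converges in distribution to $(1+\sqrt{\omega})^2$ times a unit-rate exponential, i.e.\ to $\mathrm{Exp}\!\big(1/(1+\sqrt{\omega})^2\big)$, which is the claim.

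The main obstacle is the second step rather than the bookkeeping: one must argue that the top right singular vector is genuinely Haar/uniform \emph{and} independent of the spectrum, which legitimizes the substitution $\mathbf{w}_1=\mathbf{g}/\|\mathbf{g}\|$. This is precisely the fact that rules out the tempting but false leave-one-out conclusion that each $|\mathbf{v}^H\mathbf{h}_k|^2$ is unit-mean; in truth the alignment bias pushes its mean to $(1+\sqrt{\omega})^2$, consistent with $\sum_k|\mathbf{v}^H\mathbf{h}_k|^2=\lambda_{\max}$. A secondary care point is that $\min_k|g_k|^2$, $\|\mathbf{g}\|^2$, and $\lambda_{\max}$ are not mutually independent, so the final limit must be assembled through Slutsky's theorem (two factors converging to constants, one converging in distribution) rather than by multiplying independent limits.
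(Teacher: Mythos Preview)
Your proposal is correct and follows essentially the same route as the paper's proof: both factor $\zeta_{\mathsf{air}}=K\lambda_{\max}\min_k|(\mathbf{w}_1)_k|^2$ via the SVD, use the Haar/isotropic distribution of the top right singular vector (represented as a normalized complex Gaussian) together with order statistics of exponentials for the minimum term, and the Marchenko--Pastur/Bai--Yin edge law for $\lambda_{\max}/K\to(1+\sqrt{\omega})^2$. Your write-up is slightly more careful in explicitly invoking Slutsky and flagging the independence of the singular vectors from the spectrum, but the substance is identical to the paper's argument.
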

\begin{proof}
    (See Appendix~\ref{Apdx:asymp_dist}).
\end{proof}
\begin{figure}[t]
    \centering
    \subfigure[$K=100$]{\label{subfig:K100}\includegraphics[width=0.44\textwidth]{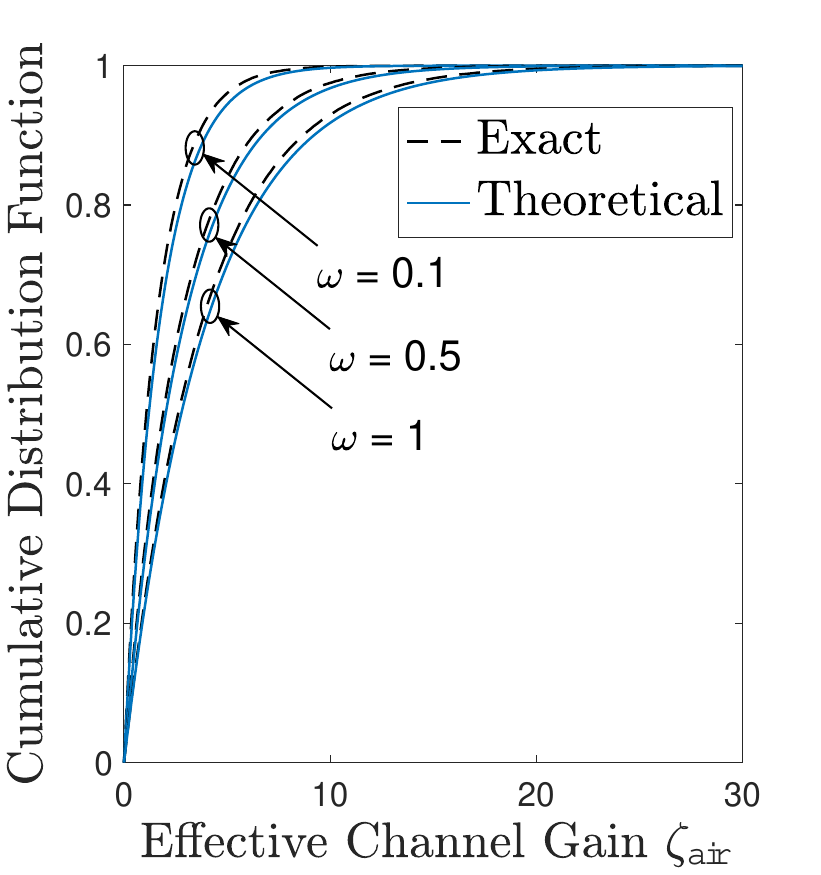}}
    \subfigure[$K=200$]{\label{subfig:K200}\includegraphics[width=0.44\textwidth]{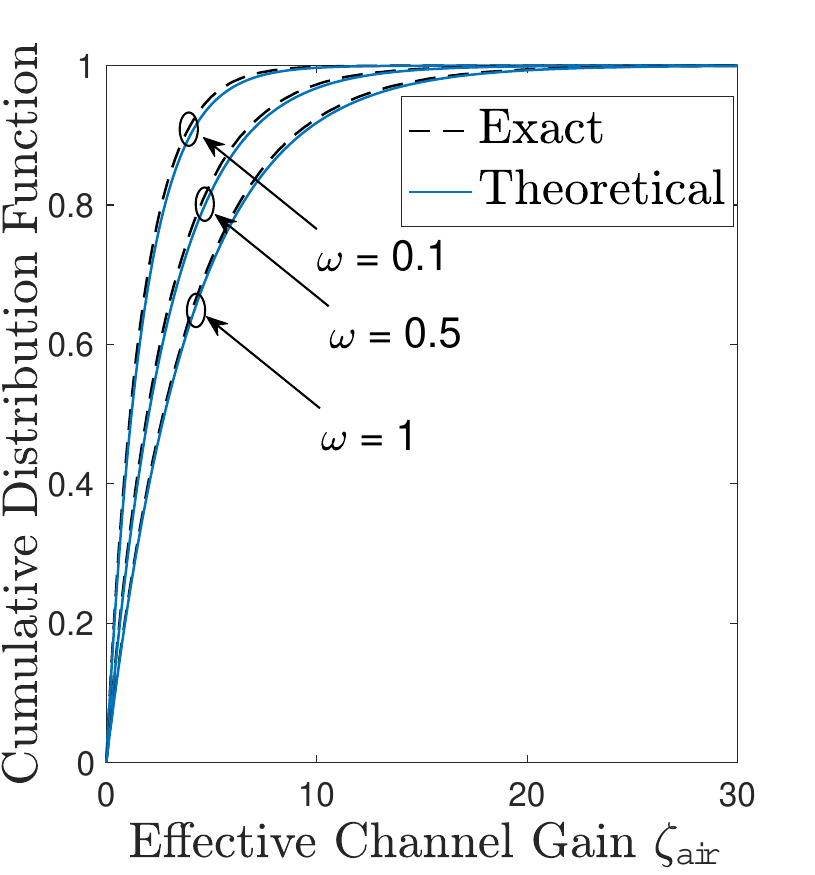}}
    \caption{Numerical validation of Lemma~\ref{Lemma:asymp_dist}.}
    \label{fig:asymp_dist}
\end{figure}
The asymptotic distribution of $\zeta_{\mathsf{air}}$ is numerically validated in Fig.~\ref{fig:asymp_dist}. The distribution can be further extended to the case with a fixed array size, in which the continuously increasing number of sensors will lead to $\omega = 0$ and $\zeta_{\mathsf{air}}\sim \mathrm{Exp}\left(1\right)$. Using Lemma~\ref{Lemma:asymp_dist}, the scaling property of exponential distributions yields the asymptotic distribution of the receive SNR:
\begin{equation}\label{eq:eff_dist_SNR}
   \frac{\gamma_{\mathsf{air}}}{K}\sim\mathrm{Exp}\left(\frac{\nu^2}{2\gamma(1+\sqrt{\omega})^2}\right),\quad\ K\rightarrow \infty, N=\omega K. 
\end{equation}
\begin{Remark}[Channel Noise Suppression]
    \emph{It follows from~\eqref{eq:eff_dist_SNR} that the power of channel noise in AirComp, given by $\frac{1}{\gamma_{\mathsf{air}}} = \frac{1}{\gamma_{\mathsf{air}}/K}\cdot\frac{1}{K}$, is inversely proportional to $K$ as the term $\gamma_{\mathsf{air}}/K$ is independent of $K$.}
\end{Remark}

\subsection{Aggregation Gain}
Given the preceding analysis, we are ready to quantify the view-and-channel aggregation gain under the joint effects of view aggregation that suppresses sensing noise and channel aggregation that suppresses channel noise. The core of the analysis is to derive a key variable -- channel-induced loss on sensing performance as follows.
\subsubsection{Channel Induced Performance}
As shown in~\eqref{eq:aircomp_uncertainty}, multi-view aggregation with channel distortion can achieve sensing uncertainty with the same form as its noiseless counterpart, except for reducing the average class differentiability from $\bar{D}$ to $\tilde{D}(\gamma_{\mathsf{air}})$ defined in~\eqref{eq:aircomp_uncertainty}. It follows that their ratio can represent the channel-induced performance loss: $A_{\mathsf{loss}} \overset{\triangle}{=} \frac{\tilde{D}(\gamma_{\mathsf{air}})}{\bar{D}}$. By using the Woodbury matrix identity~\cite{MatrixAnalysis}, $\tilde{D}(\gamma_{\mathsf{air}})$ in~\eqref{eq:aircomp_uncertainty} can be expressed as
\begin{align}\label{eq:distance_loss}
        &\tilde{D}(\gamma_{\mathsf{air}}) \nonumber\\
        &= \mathsf{Tr}\left(\bar{\mathbf{P}}\mathbf{C}^{-1}\bar{\mathbf{P}}\mathbf{D}\right) - \mathsf{Tr}(\bar{\mathbf{P}}\mathbf{C}^{-1}(\mathbf{C}^{-1} + \frac{K}{\gamma_{\mathsf{air}}}\mathbf{I}_M)^{-1}\mathbf{C}^{-1}\bar{\mathbf{P}}\mathbf{D}),\nonumber\\
        & = \bar{D} - \mathsf{Tr}(\bar{\mathbf{P}}\mathbf{C}^{-1}(\mathbf{C}^{-1} + \frac{K}{\gamma_{\mathsf{air}}}\mathbf{I}_M)^{-1}\mathbf{C}^{-1}\bar{\mathbf{P}}\mathbf{D}).
\end{align}
As a result,
\begin{equation}\label{eq:loss_of_gain}
        A_{\mathsf{loss}}= 1 - \frac{\mathsf{Tr}(\bar{\mathbf{P}}\mathbf{C}^{-1}(\mathbf{C}^{-1} + \frac{K}{\gamma_{\mathsf{air}}}\mathbf{I}_M)^{-1}\mathbf{C}^{-1}\bar{\mathbf{P}}\mathbf{D})}{\mathsf{Tr}\left(\bar{\mathbf{P}}\mathbf{C}^{-1}\bar{\mathbf{P}}\mathbf{D}\right)}.
\end{equation}
\begin{equation}
H_{\mathsf{s}}(\gamma_{\mathsf{air}})=\log \left[1+(L-1)\exp\left(-\kappa\bar{D}KA_{\mathsf{loss}}\right)\right].
\end{equation}
\subsubsection{Main Result}
Based on~\eqref{eq:eff_dist_SNR}, $A_{\mathsf{loss}}$ is independent of $K$ as $K\rightarrow \infty$. Therefore, combining~\eqref{eq:aircomp_uncertainty} and~\eqref{eq:loss_of_gain} yields the main result. 
\begin{MR}[View-and-Channel Aggregation Gain]\label{MR:mian_result_2}
     \emph{Consider an ISEA system employing AirComp-based multi-view aggregation, the sensing uncertainty is exponentially decreasing w.r.t. $K$:
    \begin{equation}\label{eq:mian_result_2}
       \boxed{H_{\mathsf{s}}(\gamma_{\mathsf{air}})\approx(L-1)\exp\left(-\kappa\xi A_{\mathsf{loss}} K\right),\quad\quad K\gg 1,}
    \end{equation}
where $\kappa$ and $\xi$ are given in~\eqref{eq:mian_result_1}.}
\end{MR}
The above result shows that due to channel aggregation in AirComp, channel distortion does not change the exponential decay of sensing uncertainty but does reduce the exponential rate by a factor of $A_{\mathsf{loss}}$.
 
Last, we investigate the effects of several system parameters on the key variable $A_{\mathsf{loss}}$. Define $r=2\gamma\nu^{-2}(1+\sqrt{\omega})^2\lambda_{\mathbf{C},\min}$ for ease of notation, where $\gamma$, $\nu^2$, $\omega$ and $\lambda_{\mathbf{C},\min}$ denote transmit SNR, the variance of transmit symbols, sensor-antenna number ratio, and the largest eigenvalue of the feature covariance matrix $\mathbf{C}$, respectively. Then, using the distribution of $\gamma_{\mathsf{air}}$ in~\eqref{eq:eff_dist_SNR}, the expectation of $A_{\mathsf{loss}}$ is obtained as
\begin{align}\label{eq:mean_loss}
    \mathsf{E}_{\frac{K}{\gamma_{\mathsf{air}}}}\left[A_{\mathsf{loss}}\right]
    &\geq 1-\mathsf{E}_{\frac{K}{\gamma_{\mathsf{air}}}}\left[\frac{1}{1+\frac{K}{\gamma_{\mathsf{air}}}\lambda_{\mathbf{C},\min}}\right]\nonumber\\
    & \overset{(a)}{=}1 -\frac{\exp(r^{-1})}{r} \cdot\mathsf{E}_1\left(r^{-1}\right)\nonumber\\
    &\overset{(b)}{\geq}1-\frac{\ln \left(1+r\right)}{r},
\end{align}
where the inequality follows from the trace inequalities~\cite{MatrixAnalysis}, the $\mathsf{E}_1\left(x\right)$ in step (a) denotes the exponential integral defined as $\mathsf{E}_1\left(x\right) = \int_x^\infty \frac{e^{-t}}{t}\mathrm{d}t$, and the inequality $\mathsf{E}_1\left(x\right)\leq e^{-x}\ln(1+1/x)$ is used in step (b). The effects of the system parameters on $A_{\mathsf{loss}}$ are then inferred from~\eqref{eq:mean_loss}. Specifically, transmit SNR $\gamma$ can linearly enlarge $r$ and thereby increases $A_{\mathsf{loss}}$ with the scaling of $\frac{\ln \left(1+\gamma\right)}{\gamma}$. In view aggregation, letting the number of antennas up faster than $K$ can give a large value of $\omega=N/K$. If $\omega \gg1$, $(1+\sqrt{\omega})^2\approx \omega$, $A_{\mathsf{loss}}$ increases w.r.t. $\omega$ at the rate of $\frac{\ln \left(1+\omega\right)}{\omega}$.
% ---------------------------------------------------------------
%  Section: Comparison
% 
% ---------------------------------------------------------------
\section{AirComp or Analog Orthogonal Access?}\label{section:comparison}
With limited receive antennas ($N\leq K$), AirComp enables simultaneous access while (spatial division) orthogonal access is infeasible. On the other hand, with a large receive array, both schemes are feasible. In this section, we benchmark AirComp against analog orthogonal access. The study leads to the development of a new scheme supporting dynamic access-mode switching.

\subsection{Performance of Analog Orthogonal Access}
The effective feature map received by using analog orthogonal access can be extracted from the real part of $\tilde{\mathbf{f}}$ in~\eqref{eq:OA_analog}: $\tilde{\mathbf{f}}^{\mathsf{re}}=\Re\left\{\tilde{\mathbf{f}}\right\}= \bar{\mathbf{f}} + \frac{\nu}{K\sqrt{P}}\Re\left\{\mathbf{Z}^H\sum_k\mathbf{b}_k\right\}$. It shows that the analog orthogonal access introduces Gaussian channel noise into the ground-truth features with the covariance of $\frac{1}{\gamma_{\mathrm{aoa}}}\mathbf{I}_M$  and $\gamma_{\mathrm{aoa}}\overset{\triangle}{=}\frac{\gamma K^2}{\nu^2\sum_k\Vert\mathbf{b}_k\Vert^2}$ being the effective receive SNR. The result is similar to the case of AirComp except for the changed effective SNR. This allows the sensing uncertainty in~\eqref{eq:aircomp_uncertainty} to be modified for analog orthogonal access as
\begin{equation}\label{eq:orthogonal_uncertianty}
    H_{\mathsf{s}}(\gamma_{\mathrm{aoa}})= \log \left[1+(L-1)\exp\left(-\kappa \tilde{D}(\gamma_{\mathrm{aoa}})K)\right)\right],
\end{equation}
where $ \tilde{D}(\gamma_{\mathrm{aoa}}) = \mathsf{Tr}\left(\bar{\mathbf{P}}\left(\mathbf{C} + \frac{K}{\gamma_{\mathrm{aoa}}}\mathbf{I}_M\right)^{-1}\bar{\mathbf{P}}\mathbf{D}\right)$. 
\subsection{Crossing Point and Access Mode Switching}
Comparing the uncertainty functions in~\eqref{eq:orthogonal_uncertianty} and~\eqref{eq:aircomp_uncertainty} and using their monotonicity (see Lemma~\ref{Lemma:monotonical}), we can infer that AirComp outperforms analog orthogonal access if $ \gamma_{\mathrm{air}}\geq\gamma_{\mathrm{aoa}}$, and vice versa. This motivates us to propose the scheme of adaptive access-mode switching, referring to as \emph{adaptive access}, as 
    \begin{equation}\label{eq:adaptive}
        \begin{aligned}
            \text{Multi-access\ mode}=\left\{ \begin{array}{ll}
                  \text{AirComp},& \!\! \gamma_{\mathrm{air}}\geq \gamma_{\mathrm{aoa}},  \\
                  \text{Anal.\ orthog.\ access},&\!\!  \gamma_{\mathrm{air}}< \gamma_{\mathrm{aoa}}.
            \end{array}\right.
        \end{aligned}
    \end{equation}
Given adaptive access, the dependence of optimal access mode on system parameters $N$ and $K$ is understood in the sequel. To this end, a useful result on the distribution of the square norm of the ZF beamformers in analog orthogonal access, $\Vert\mathbf{b}_k\Vert^2$, is derived as follows.
\begin{Lemma}\label{Lemma:independence}
    \emph{Given $\mathbf{b}_k=\mathbf{H}(\mathbf{H}^H\mathbf{H})^{-1}\mathbf{e}_k$ in analog orthogonal access, $\Vert\mathbf{b}_k\Vert^2$ follows an i.i.d. distribution of $\Vert\mathbf{b}_k\Vert^2\sim 2\cdot\mathrm{Inv-}\chi_{2(N-K+1)}^2$, where $\mathrm{Inv-}\chi_{2(N-K+1)}^2$ denotes the inverse chi-square distribution with $2(N-K+1)$ degrees of freedom.}
\end{Lemma}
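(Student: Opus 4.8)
The plan is to reduce $\Vert\mathbf{b}_k\Vert^2$ to a single diagonal entry of an inverse complex Wishart matrix and then identify that entry's law through a projection argument. First I would expand the squared norm directly: since $\mathbf{b}_k=\mathbf{H}(\mathbf{H}^H\mathbf{H})^{-1}\mathbf{e}_k$, we have $\Vert\mathbf{b}_k\Vert^2=\mathbf{e}_k^H(\mathbf{H}^H\mathbf{H})^{-1}\mathbf{H}^H\mathbf{H}(\mathbf{H}^H\mathbf{H})^{-1}\mathbf{e}_k=[(\mathbf{H}^H\mathbf{H})^{-1}]_{kk}$, the $k$-th diagonal element of the inverse Gram matrix. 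Because the columns $\{\mathbf{h}_k\}$ are i.i.d. $\mathcal{CN}(\mathbf{0},\mathbf{I}_N)$ and hence exchangeable, all diagonal entries share the same marginal law, so it suffices to analyze $k=1$.

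Next I would partition $\mathbf{H}=[\mathbf{h}_1,\ \mathbf{H}_2]$ with $\mathbf{H}_2=[\mathbf{h}_2,\dots,\mathbf{h}_K]$ and apply the block-matrix (Schur-complement) inversion formula, which gives $[(\mathbf{H}^H\mathbf{H})^{-1}]_{11}=1/(\mathbf{h}_1^H\mathbf{h}_1-\mathbf{h}_1^H\mathbf{H}_2(\mathbf{H}_2^H\mathbf{H}_2)^{-1}\mathbf{H}_2^H\mathbf{h}_1)=1/(\mathbf{h}_1^H\mathbf{P}^{\perp}\mathbf{h}_1)$, where $\mathbf{P}^{\perp}=\mathbf{I}_N-\mathbf{H}_2(\mathbf{H}_2^H\mathbf{H}_2)^{-1}\mathbf{H}_2^H$ is the orthogonal projector onto the complement of $\mathrm{span}(\mathbf{H}_2)$. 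Since $N\geq K$ for orthogonal access and $\mathbf{H}_2$ has rank $K-1$ almost surely, $\mathbf{P}^{\perp}$ is a rank-$(N-K+1)$ projector.

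The crux is to identify the law of the quadratic form in the denominator. Conditioning on $\mathbf{H}_2$, the vector $\mathbf{h}_1$ is independent of the now-fixed projector $\mathbf{P}^{\perp}$; by the unitary invariance of $\mathcal{CN}(\mathbf{0},\mathbf{I}_N)$ I would rotate $\mathbf{h}_1$ into the eigenbasis of $\mathbf{P}^{\perp}$, reducing $\mathbf{h}_1^H\mathbf{P}^{\perp}\mathbf{h}_1$ to a sum of $N-K+1$ i.i.d. terms $\vert z_j\vert^2$ with $z_j\sim\mathcal{CN}(0,1)$. Here I must track the complex-Gaussian variance convention carefully: each $\vert z_j\vert^2$ is the sum of two independent $\mathcal{N}(0,\tfrac{1}{2})$ squares, so $\vert z_j\vert^2\sim\tfrac{1}{2}\chi_2^2$, whence $\mathbf{h}_1^H\mathbf{P}^{\perp}\mathbf{h}_1\sim\tfrac{1}{2}\chi^2_{2(N-K+1)}$. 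Crucially this conditional law is free of $\mathbf{H}_2$, so it is also the unconditional law; inverting then yields $\Vert\mathbf{b}_1\Vert^2\sim 2/\chi^2_{2(N-K+1)}=2\cdot\mathrm{Inv\text{-}}\chi^2_{2(N-K+1)}$, as claimed.

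I expect the factor-of-two bookkeeping (real versus complex $\chi^2$ degrees of freedom) to be the easy-to-slip step, while the genuinely delicate point is the word \emph{i.i.d.}: the exchangeability argument cleanly establishes that the $\Vert\mathbf{b}_k\Vert^2$ are identically distributed, but the diagonal entries of an inverse Wishart matrix are in general correlated, so exact mutual independence does not hold. I would therefore prove the identical-marginal statement rigorously and treat joint independence as the working approximation underlying the subsequent SNR analysis, where only $\sum_k\Vert\mathbf{b}_k\Vert^2$ enters and the marginal law together with linearity of expectation already supplies what is needed.
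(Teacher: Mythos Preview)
Your proposal is correct and follows essentially the same route as the paper: both reduce $\Vert\mathbf{b}_k\Vert^2$ to the $k$-th diagonal entry of $(\mathbf{H}^H\mathbf{H})^{-1}$, express that entry as the reciprocal of $\mathbf{h}_k^H\mathbf{P}^{\perp}\mathbf{h}_k$ with $\mathbf{P}^{\perp}$ the rank-$(N-K+1)$ projector onto the orthogonal complement of the remaining columns, and then invoke unitary invariance of $\mathcal{CN}(\mathbf{0},\mathbf{I}_N)$ to obtain the $\tfrac{1}{2}\chi^2_{2(N-K+1)}$ law for the denominator. Your caveat that only the identical-distribution part is rigorous while mutual independence of the diagonal entries generally fails is well taken; the paper's proof in fact establishes only the marginal law and is silent on joint independence, so your reading is sharper than the lemma's stated ``i.i.d.''\ on that point.
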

\begin{proof}
    (See Appendix~\ref{Apdx:independence}).
\end{proof}
Using Lemma~\ref{Lemma:independence} and the law of large numbers, the effective receive SNR for analog orthogonal access can be approximated as
\begin{equation}\label{eq:app_aoa}
\begin{aligned}
        \frac{\gamma_{\mathrm{aoa}}}{K} = \frac{\gamma}{\nu^2\frac{1}{K}\sum_k\Vert\mathbf{b}_k\Vert^2}\approx \frac{\gamma(N-K)}{\nu^2},\quad\quad K\gg 1.
\end{aligned}
\end{equation}
It follows from~\eqref{eq:eff_dist_SNR} and~\eqref{eq:app_aoa} that if $N=K$, $\frac{\gamma_{\mathrm{aoa}}}{K} \approx 0$ and $\frac{\gamma_{\mathrm{air}}}{K}\geq0$, leading to the event,  $ \gamma_{\mathrm{air}}\geq\gamma_{\mathrm{aoa}}$, occurring with probability $1$. On the other hand, if $N>K$, the probability of this event is asymptotically close to $0$ since using~\eqref{eq:eff_dist_SNR} and~\eqref{eq:app_aoa}
\begin{align}
    \mathrm{Pr}\left(\gamma_{\mathrm{air}}\geq\gamma_{\mathrm{aoa}}\right) &= \mathrm{Pr}\left(\frac{\gamma_{\mathrm{air}}}{K}\geq\frac{K\gamma(\omega-1)}{\nu^2}\right)\nonumber\\
    &=\exp\left(-\frac{K}{2}\frac{\sqrt{\omega}-1}{\sqrt{\omega}+1}\right)\rightarrow 0,\ \mathrm{as\ }K\rightarrow \infty.
\end{align}
Combining the above results gives the following conclusion.
\begin{MR}[Mode Switching Point]\label{MR:asymp_comparison}
    \emph{There exists a crossing point of the sensing uncertainty between AirComp and analog orthogonal access. Given $N = \omega K$ and $K>>1$, AirComp outperforms analog orthogonal access for the case of $\omega\leq 1$ (i.e., $N\leq K$); otherwise, the reverse holds. In other words, the crossing point is around $N=K$. This conclusion is numerically validated in Fig.~\ref{fig:crossingpoint}.}
\end{MR}
\begin{figure}[t]
    \centering
    \subfigure[$K=10$]{\label{subfig:K10}\includegraphics[width=0.44\textwidth]{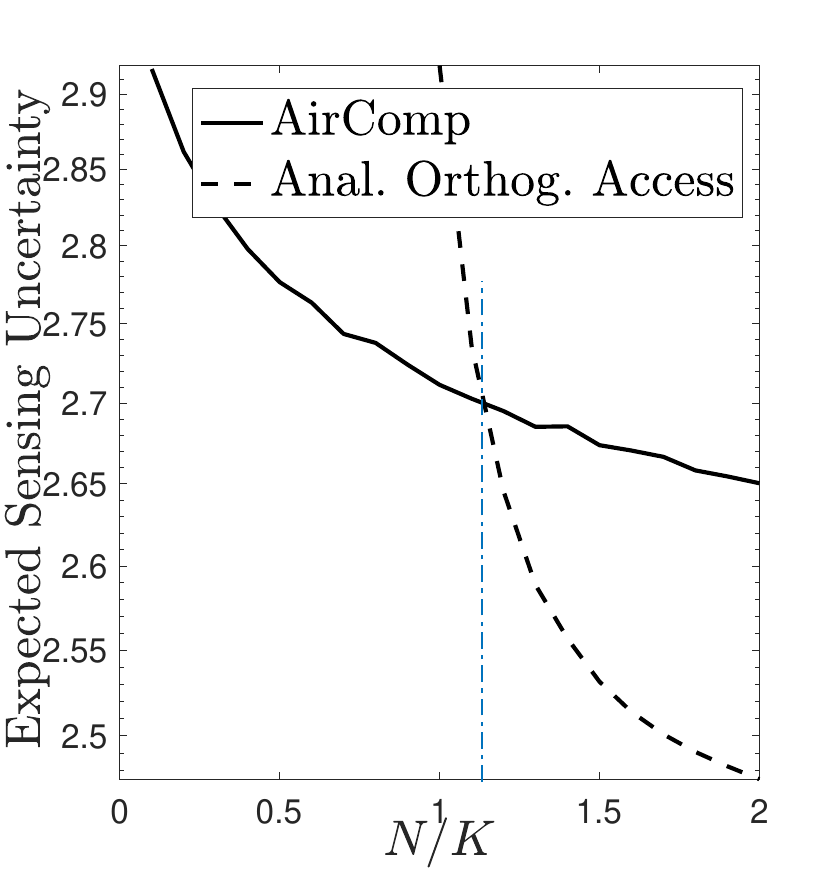}}
    \subfigure[$K=50$]{\label{subfig:K50}\includegraphics[width=0.44\textwidth]{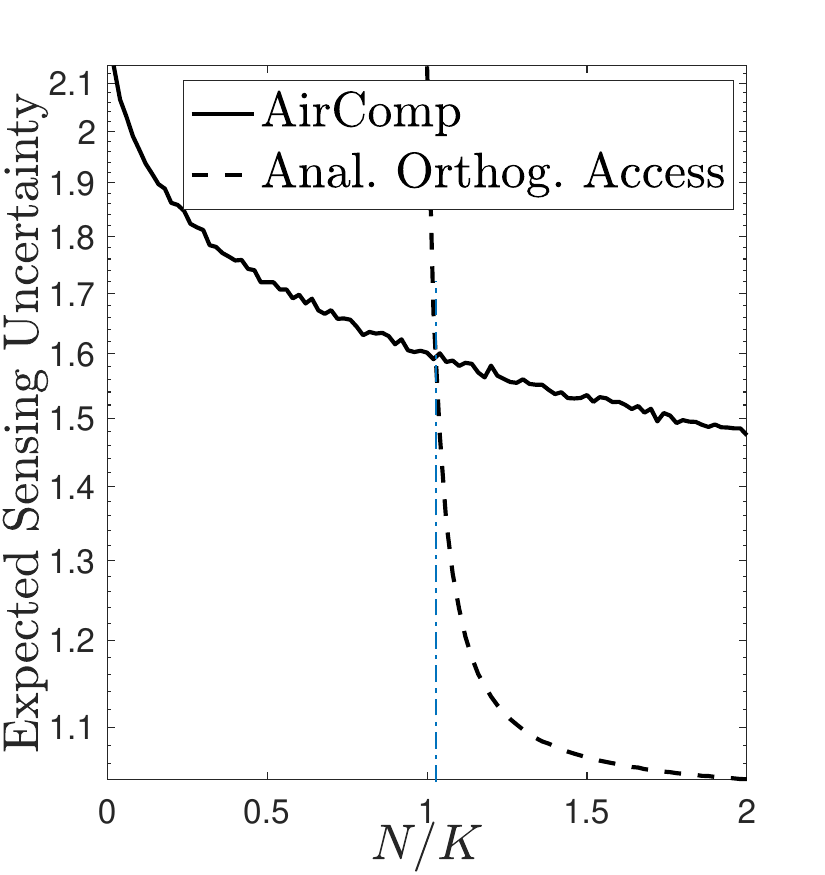}}
    \caption{Numerical validation on the crossing point between AirComp and analog orthogonal access. The parameters are set as $M=L=10$, $\mathbf{C} = 0.1\mathbf{I}_M$, $\mathsf{rank}(\mathbf{P}_k)=1$, $\gamma=10\mathrm{\ dB}$. The expectation is taken over channel distribution.}
    \label{fig:crossingpoint}
\end{figure}

% ---------------------------------------------------------------
%  Section: Experiments and Discussions
% 
% ---------------------------------------------------------------
\section{Experimental Results}\label{section:experiments}

\subsection{Experimental Settings}

Consider the ISEA system as shown in Fig.~\ref{fig:system_model}. Assuming frequency non-selective Rayleigh fading, the multi-access channel is composed of i.i.d. Gaussian $\mathcal{N}(0,1)$ elements. The coherence duration of the channel spreads over $256$ symbol slots, supporting analog transmission of feature vector with the maximum length of $256$. For the MVCNN architecture, we consider both the cases of linear classification on synthetic data and CNN-based classification on real-world data as follows.
\begin{itemize}
    \item \emph{Linear classification on synthetic data}: Local feature maps are drawn from the GMM in~\eqref{eq:local_PDF} and fed into the classifier given in~\eqref{eq:revised_classifier} after over-the-air averaging via AirComp. The feature maps' dimensionality is $M=100$, the number of classes is $L=20$, and the covariance matrix $\mathbf{C} = 0.1\mathbf{I}_M$. The observation matrices $\{\mathbf{P}_k\}$ are randomly generated as the principal eigenspaces of random matrices with i.i.d. Gaussian entries. For instance, let $\mathbf{G}$ be a randomly generated $M\times M$ Gaussian matrix. Then, $\mathbf{P}_k = \mathbf{U}_{\mathbf{G}}\mathbf{U}_{\mathbf{G}}^{\top}$ with $\mathbf{U}_{\mathbf{G}}$ being the $\mathsf{rank}(\mathbf{P}_k)$-dimensional principal eigenspace of $\mathbf{G}$.
    \item \emph{MVCNN-based classification on real-world data}: We consider the well-known \emph{ModelNet} dataset which comprises multi-view images of objects (e.g., sofas and tables) and the popular VGG11 model for implementing the MVCNN architecture. The VGG11 is split before the linear classifier with the classifier employed at the server and the other components deployed at each sensor for feature extraction~\cite{airpooling2023}. The resultant MVCNN architecture is trained for average pooling. Therein, we select a data subsect of ModelNet corresponding to $L=10$ popular object classes for our experiments. The data entries for each target are captured by $K=12$ sensors (i.e., cameras) with the angle between adjacent sensors being $30^{\circ}$. Each feature map output from an on-device model is described as a $512\times7\times7$ tensor, where the $512\times7$ slices of these feature tensors are transmitted and aggregated sequentially at the server for global classification. 
\end{itemize}

Last, to evaluate the performance of AirComp, we adopt two benchmarking schemes, namely analog orthogonal access in~\eqref{eq:OA_analog} and the adaptive (dual-mode) access in~\eqref{eq:adaptive}, to support local feature uploading.
\begin{figure}[t]
    \centering
    \includegraphics[width=0.8\textwidth]{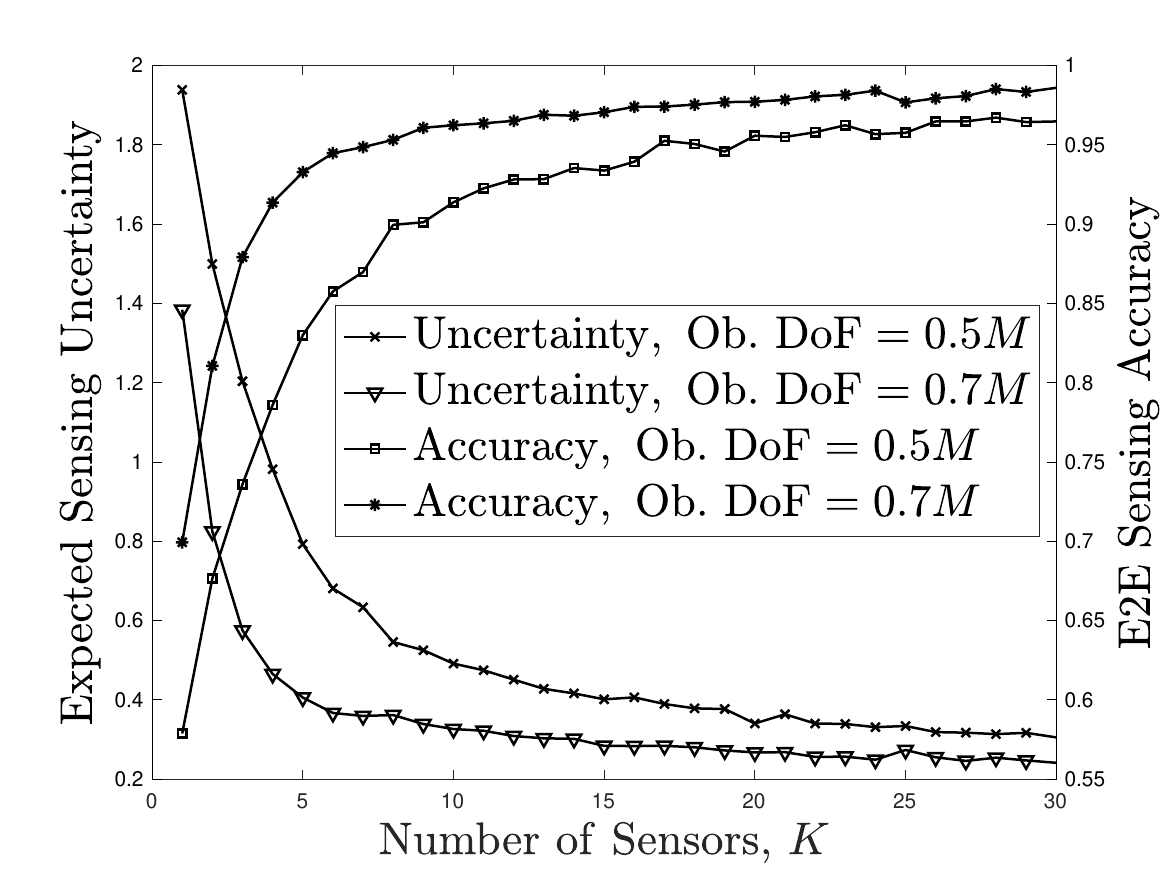} %
    \caption{(Linear classification) Comparison between E2E sensing uncertainty and accuracy for a variable number of sensors and different local observation DoFs.}
    \label{fig:unvertainty_accuracy}
\end{figure}
\begin{figure}[t]
    \centering
    \subfigure[]{\label{subfig:acc_N_1}\includegraphics[width=0.44\textwidth]{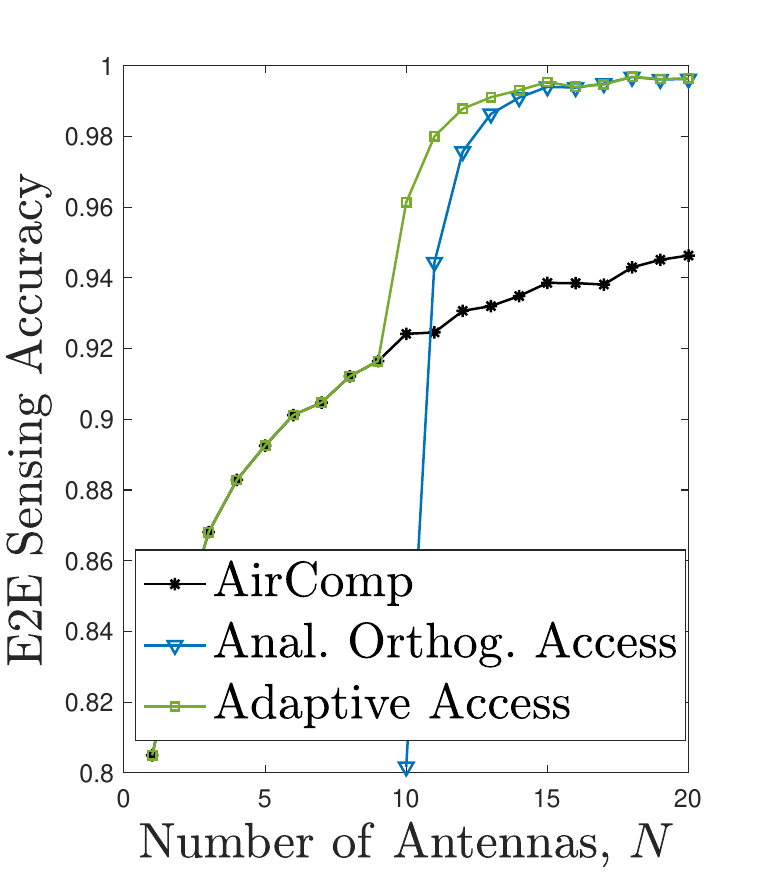}}
    \subfigure[]{\label{subfig:gain}\includegraphics[width=0.44\textwidth]{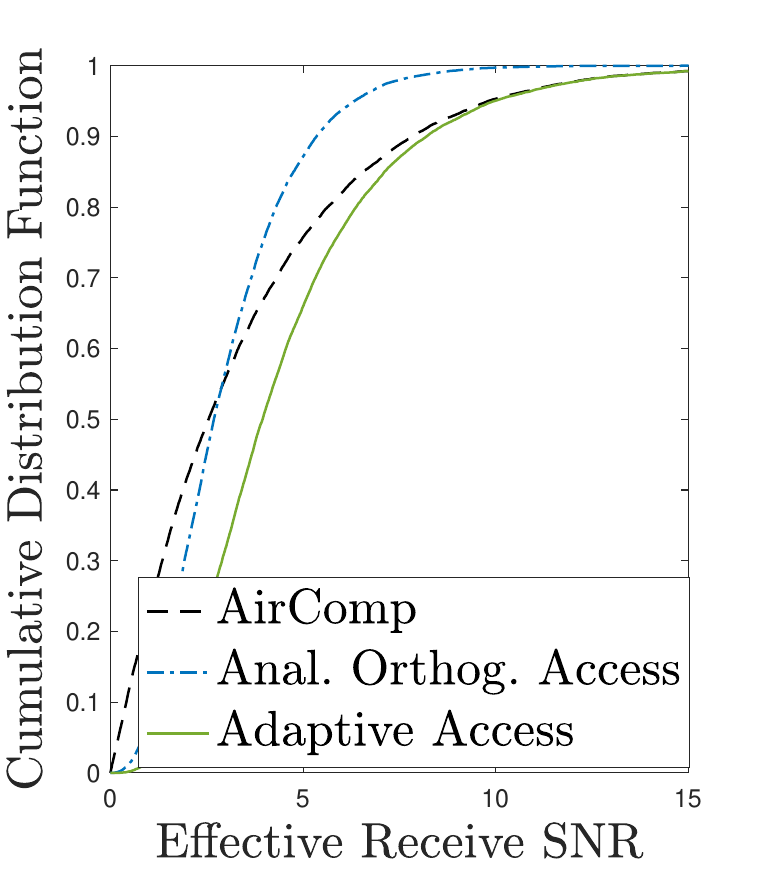}}
    \caption{(Linear classification) Performance comparison between AirComp, analog orthogonal access, and adaptive access in terms of (a) E2E sensing accuracy and (b) effective receive SNR with $N=12$.}
    \label{fig:accuracy_syn}
\end{figure}
\subsection{ISEA with Linear Classification}
The curves of E2E sensing accuracy and uncertainty versus number of sensors, $K$, are plotted in Fig.~\ref{fig:unvertainty_accuracy}. Different levels of local observation DoFs are considered: $\mathsf{rank}(\mathbf{P}_k)=\{0.5M,0.7M\}$ for all $k$. First, it can be observed from Fig.~\ref{fig:unvertainty_accuracy} that the E2E sensing uncertainty diminishes at an exponential rate as $K$ grows. This is consistent with the main analytical results in~\eqref{eq:mian_result_2}. On the other hand, the E2E sensing accuracy converges to the saturation level (i.e., maximum accuracy) also exponentially fast. The consistency of uncertainty and accuracy validates their duality. One can also observe the existence of a critical range (i.e., $K\leq 10$) where the sensing performance is sensitive to changes on sensor number. Last, increasing local observation DoFs is found to yield significant gains on sensing performance. 

In Fig.~\ref{fig:accuracy_syn}, the performance of ISEA using AirComp is compared with that of counterparts employing benchmarking access schemes. Specifically, Fig.~\ref{subfig:acc_N_1} depicts the curves of E2E sensing accuracy versus receive array size, $N$; the cumulative distribution function (CDF) curves of effective receive SNR are plotted in Fig.~\ref{subfig:gain}. The number of sensors is fixed as $K=10$. The most important observation can be made from Fig.~\ref{subfig:acc_N_1} that there exists a crossing point between AirComp and analog orthogonal access at around $N=K$. This validates the Main Result~\ref{MR:asymp_comparison}. Next, the adaptive access scheme designed in Section~\ref{section:comparison} is observed to be effective as it outperforms the other two underpinning schemes. The above observations are consistent with those from Fig.~\ref{subfig:gain}, i.e., the crossing point of SNR CDF curves and superiority of adaptive access in terms of effective receive SNR.
\begin{figure}[t]
    \centering
    \subfigure[]{\label{subfig:acc_K}\includegraphics[width=0.44\textwidth]{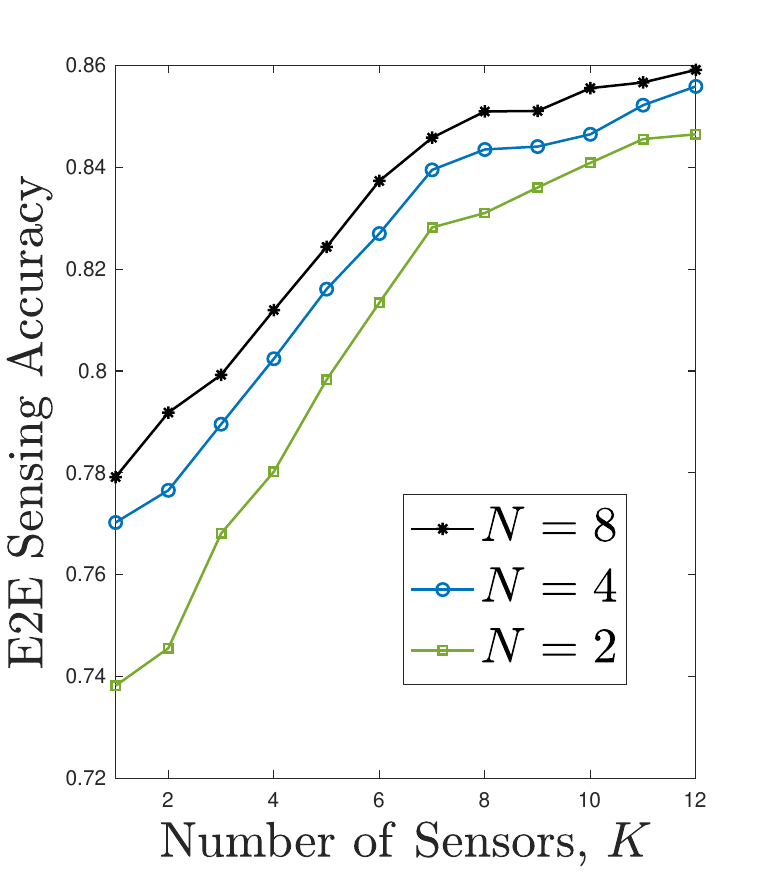}}
    \subfigure[]{\label{subfig:acc_N}\includegraphics[width=0.44\textwidth]{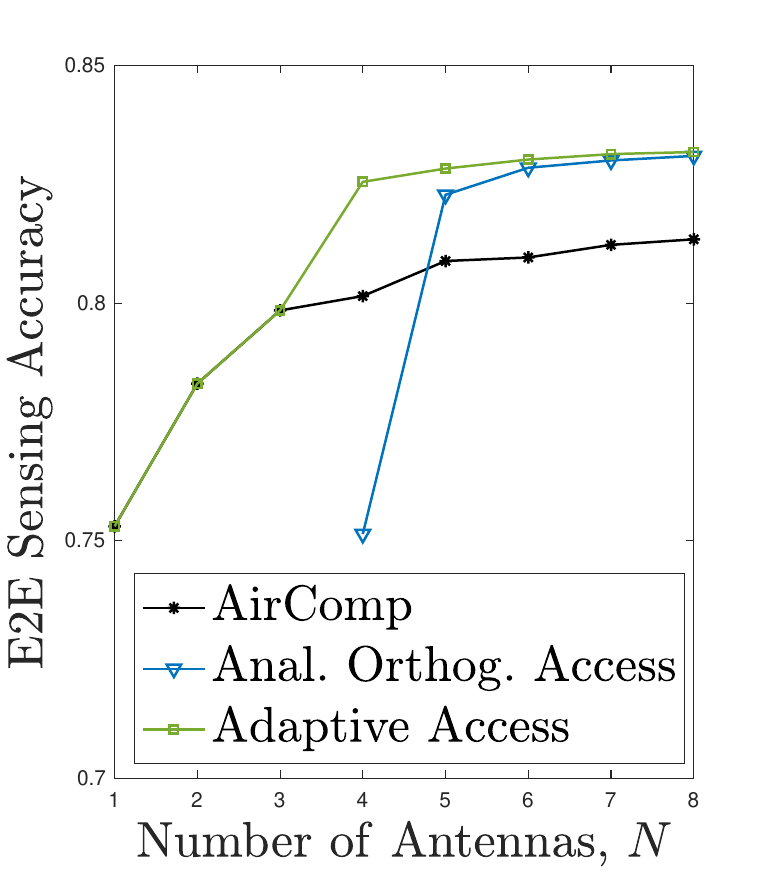}}
    \caption{(MVCNN classification) The dependence of E2E sensing accuracy on (a) the numbers of sensors and (b) receive antennas under transmit SNR $\gamma=-10$ dB.}
    \label{fig:accuracy_mvcnn}
\end{figure}

\begin{figure}[t]
    \centering
    \subfigure[$K = 4$]{\label{subfig:N4_K4}\includegraphics[width=0.44\textwidth]{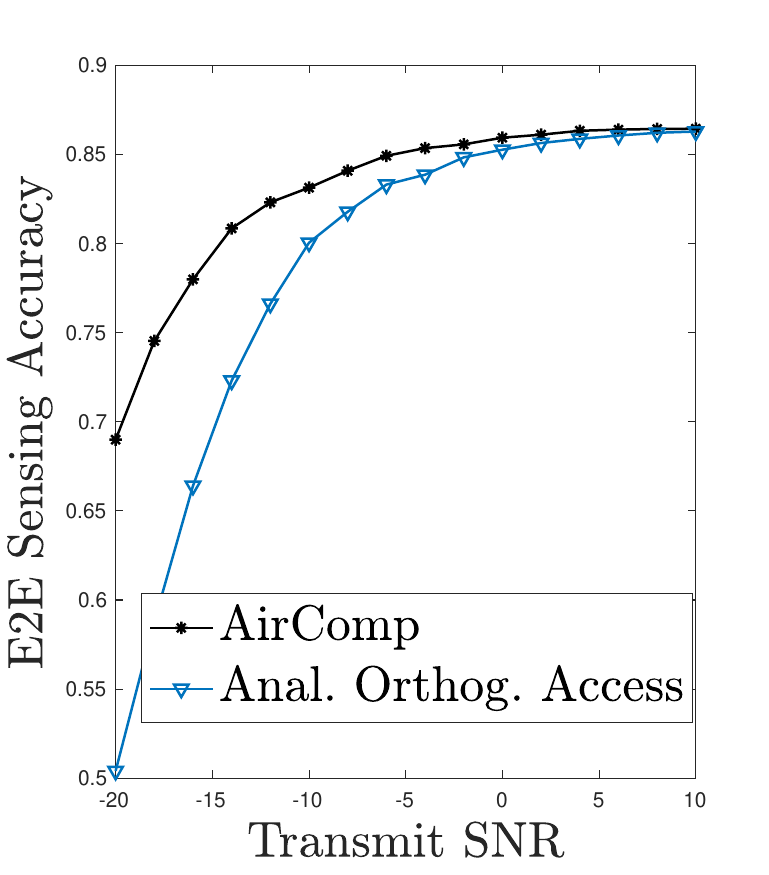}}
    \subfigure[$K = 8$]{\label{subfig:N8_K8}\includegraphics[width=0.44\textwidth]{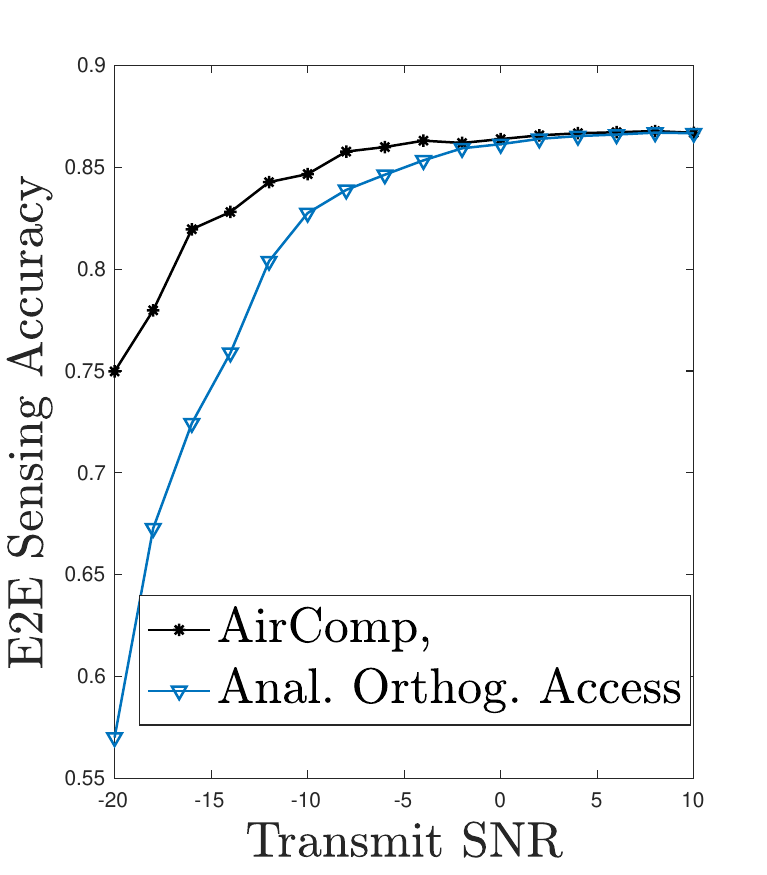}}
    \caption{(MVCNN classification) The effect of transmit SNR on the E2E sensing accuracy for the number of sensors $K=N=\{4,8\}$.}
    \label{fig:accuracy_SNR}
\end{figure}
\subsection{MVCNN-based Classification} 
Experimental results for the case of MVCNN based classification are presented to validate the insights from our analysis based on linear classification. Specifically, the MVCNN counterparts of the E2E sensing accuracy curves in Fig.~\ref{fig:unvertainty_accuracy} and those in Fig.~\ref{subfig:acc_N_1} are obtained as plotted in Fig.~\ref{subfig:acc_K} and Fig.~\ref{subfig:acc_N}, respectively. The main observations from the MVCNN curves are lowly identical to those from their linear-classification counterparts. Specifically, the exponential convergence of E2E sensing accuracy is reflected in Fig.~\ref{subfig:acc_K}; the crossing point between AirComp and analog orthogonal access is found in Fig.~\ref{subfig:acc_N} to be also around $N=K$.

Last, the curves of E2E sensing accuracy versus transmit SNR are plotted in Fig.~\ref{fig:accuracy_SNR} for different values of $(N,K)$. The main observation is that the close-to-maximum accuracy is achievable even at very low transmit SNR (e.g., $-5$ dB). As mentioned early, the reason is view-and-channel aggregation gain in two aspects. First, the aggregation gain enhances the receive SNR by a factor approximately equal to $K$. In other words, $K=10$ can achieve the effective receive SNR of $10$ dB even given transmit SNR as low as $0$ dB. The other aspect is that view aggregation improves the model's classification margin (see, e.g.,~\cite{airpooling2023}) to absorb channel distortion without compromising sensing accuracy. The above observation advocates the use of AirComp and uncoded analog transmission at large to support fast ISEA.

% % ---------------------------------------------------------------
% %  Section: Conclusion
% % 
% % ---------------------------------------------------------------
\section{Conclusion}
We have presented a theoretical framework for characterizing the performance gains from view-and-channel aggregation in an ISEA system. Our results reveal that the sensing/inference uncertainty decreases exponentially with the increasing number of views/sensors, with the rate being directly proportional to the global discriminant gain. Furthermore, it is demonstrated that the channel distortion resulting from aggregation via AirComp or analog orthogonal access does not alter this scaling law, except for a reduction in the exponential rate. Utilizing the end-to-end performance analysis, we have also developed a scheme for aggregation mode adaption that dynamically switches between AirComp and orthogonal analog access to achieve optimal system performance.

Expanding upon the insights from this study, we anticipate that distributed sensing, leveraging its multi-view aggregation gain, will emerge as a mainstream direction in the area of ISEA.
Advancing this direction calls for novel quantitative analysis and protocol designs that strike a balance between access control, latency, and computation accuracy so as to improve the end-to-end performance. This study serves as an initial step in establishing a theoretical framework for the advancement of ISEA. Extending this framework to incorporate other wireless techniques (such as broadband transmission and random access) and other sensing scenarios (including sensing via point clouds and multi-modal fusion) warrant further investigation.

\appendix
\subsection{Proof of Lemma~\ref{Lemma:pooled_feature}}\label{Apdx:pooled_feature}
Using~\eqref{eq:local_feature}, the aggregated feature map can be rewritten as
\begin{equation*}
    \bar{\mathbf{f}}=\frac{1}{K}\sum_k\left(\mathbf{P}_{k}\mathbf{g} + \mathbf{w}_k\right)=\bar{\mathbf{P}}\mathbf{g} + \frac{1}{K}\sum_k\mathbf{w}_k,
\end{equation*}
where $\bar{\mathbf{P}} = \frac{1}{K}\sum_k\mathbf{P}_{k}$. According to~\eqref{eq:uniform_prior}, the first term $\bar{\mathbf{P}}\mathbf{g}$ has a distribution of $\mathrm{Pr}(\bar{\mathbf{P}}\mathbf{g}=\bar{\mathbf{P}}\bmu_{\ell}) = \frac{1}{L}, \ \forall \ell$. At the same time, the summation of the i.i.d. Gaussian sensing noise, $\frac{1}{K}\sum_k\mathbf{w}_k$, follows a Gaussian distribution, say $\frac{1}{K}\sum_k\mathbf{w}_k\sim\mathcal{N}(\mathbf{0},\frac{1}{K}\mathbf{C})$. Hence, the overall distribution of $\bar{\mathbf{f}}$ is a Gaussian mixture with the same priors and the $\ell$-th Gaussian component has the mean $\bar{\mathbf{P}}\bmu_{\ell}$ and covariance $\mathbf{C}/K$, which completes the proof.

\subsection{Proof of Proposition~\ref{Prop:uncertainty}}\label{Apdx:uncertainty}
Based on the definition given in~\eqref{eq:uncertainty}, the sensing uncertainty using the aggregated feature $\bar{\mathbf{f}}$ is given as 
\begin{equation*}
    H  = \mathsf{E}\left[-\sum_{\ell=1}^{L}\int\mathrm{Pr}\left(\bmu_{\ell}|\bar{\mathbf{f}}\right)\log\mathrm{Pr}\left(\bmu_{\ell}|\bar{\mathbf{f}}\right)p(\bar{\mathbf{f}})\mathrm{d}\bar{\mathbf{f}}\right].
\end{equation*}
Given the output uniqueness of the linear classifier in~\eqref{eq:ML_classifier}, there is a one-to-one mapping between $\bmu_{\ell}$ and $\bar{\mathbf{P}}\bmu_{\ell}$, leading to $\mathrm{Pr}\left(\bmu_{\ell}|\bar{\mathbf{f}}\right)=\mathrm{Pr}\left(\bar{\mathbf{P}}\bmu_{\ell}|\bar{\mathbf{f}}\right)$. Then, using the Bayes' theorem, the probability $\mathrm{Pr}\left(\bar{\mathbf{P}}\bmu_{\ell}|\bar{\mathbf{f}}\right)$ can be expressed as $\mathrm{Pr}\left(\bar{\mathbf{P}}\bmu_{\ell}|\bar{\mathbf{f}}\right) =p\left(\bar{\mathbf{f}}|\bar{\mathbf{P}}\bmu_{\ell}\right)p(\bar{\mathbf{P}}\bmu_{\ell})/p(\bar{\mathbf{f}})$, which is used to rewrite $ H$ as
\begin{equation*}
    \begin{aligned}
        H=&-\sum_{\ell=1}^{L}\int p\left(\bar{\mathbf{f}}|\bar{\mathbf{P}}\bmu_{\ell}\right)p(\bar{\mathbf{P}}\bmu_{\ell})\log\frac{p\left(\bar{\mathbf{f}}|\bar{\mathbf{P}}\bmu_{\ell}\right)p(\bar{\mathbf{P}}\bmu_{\ell})}{p(\bar{\mathbf{f}})}\mathrm{d}\bar{\mathbf{f}} \\
        % =&-\sum_{\ell=1}^{L}p(\bar{\mathbf{P}}\bmu_{\ell})\int p\left(\bar{\mathbf{f}}|\bar{\mathbf{P}}\bmu_{\ell}\right)\log\frac{p\left(\bar{\mathbf{f}}|\bar{\mathbf{P}}\bmu_{\ell}\right)p(\bar{\mathbf{P}}\bmu_{\ell})}{\sum_{\ell^{\prime}}p\left(\bar{\mathbf{f}}|\bar{\mathbf{P}}\bmu_{\ell^{\prime}}\right)p(\bar{\mathbf{P}}\bmu_{\ell^{\prime}})}\mathrm{d}\bar{\mathbf{f}}\\
        \overset{(a)}{=}&\frac{1}{L}\sum_{\ell=1}^{L}\int p\left(\bar{\mathbf{f}}|\bar{\mathbf{P}}\bmu_{\ell}\right)\log\frac{\sum_{\ell^{\prime}}p\left(\bar{\mathbf{f}}|\bar{\mathbf{P}}\bmu_{\ell^{\prime}}\right)}{p\left(\bar{\mathbf{f}}|\bar{\mathbf{P}}\bmu_{\ell}\right)}\mathrm{d}\bar{\mathbf{f}}\\
        \overset{(b)}{=}&\frac{1}{L}\sum_{\ell=1}^{L}\int p\left(\mathbf{x}\right)\log\frac{\sum_{\ell^{\prime}}e^{\frac{-K}{2}(\mathbf{x}+\bar{\mathbf{P}}\bphi_{\ell,\ell^{\prime}})^{\top}\mathbf{C}^{-1}(\mathbf{x}+\bar{\mathbf{P}}\bphi_{\ell,\ell^{\prime}})}}{e^{\frac{-K}{2}\mathbf{x}^{\top}\mathbf{C}^{-1}\mathbf{x}}} \mathrm{d}\mathbf{x},
    \end{aligned}
\end{equation*}
where the PDF $p\left(\bar{\mathbf{f}}|\bar{\mathbf{P}}\bmu_{\ell}\right)$ is used in step (a), the integration variable is changed as $\mathbf{x} = \bar{\mathbf{f}}-\bmu_{\ell}$ with the resulting $p\left(\mathbf{x}\right) = \mathcal{N}\left(\mathbf{0},\frac{\mathbf{C}}{K}\right)$, $\bphi_{\ell,\ell^{\prime}} = \bmu_{\ell}-\bmu_{\ell^{\prime}}$ in step (b).
Then, the lower and upper bounds of $H$ are given respectively as follows.
\subsubsection{Lower bound} 
First, using the definition of $G_{\ell,\ell^{\prime}}$ in Lemma~\ref{Lemma:global_gain}, the sensing uncertainty is given as $    H= \frac{1}{L}\sum_{\ell=1}^{L}\int p\left(\mathbf{x}\right)\log\left[1 + \sum_{\ell^{\prime}\neq\ell}e^{-K\mathbf{x}^{\top}\mathbf{C}^{-1}\bar{\mathbf{P}}\bphi_{\ell,\ell^{\prime}}}e^{-\frac{1}{2}G_{\ell,\ell^{\prime}}}\right] \mathrm{d}\mathbf{x}$. Then, based on the convexity of log-sum-exp functions, using Jensen's inequality gives a lower bound of $H$ as
\begin{equation*}
    \begin{aligned}
        H
        &\geq \frac{1}{L}\sum_{\ell=1}^{L}\log\left[1 + \sum_{\ell^{\prime}\neq\ell}e^{-\int p\left(\mathbf{x}\right)K\mathbf{x}^{\top}\mathbf{C}^{-1}\bar{\mathbf{P}}\bphi_{\ell,\ell^{\prime}}\mathrm{d}\mathbf{x}}e^{-\frac{1}{2}G_{\ell,\ell^{\prime}}}\right],
    \end{aligned}
\end{equation*}
where the integral in the exponential term is computed as zero and the final lower bound is obtained.
\subsubsection{Upper bound}
To derive the upper bound of the sensing uncertainty, we first define rewrite $H$ derived before as 
\begin{equation*}
    \begin{aligned}
        H=&\frac{1}{L}\sum_{\ell=1}^{L}\int p\left(\mathbf{x}\right)\log\frac{\sum_{\ell^{\prime}}e^{\frac{-K}{2}(\mathbf{x}+\bar{\mathbf{P}}\bphi_{\ell,\ell^{\prime}})^{\top}\mathbf{C}^{-1}(\mathbf{x}+\bar{\mathbf{P}}\bphi_{\ell,\ell^{\prime}})}}{e^{\frac{-K}{2}\mathbf{x}^{\top}\mathbf{C}^{-1}\mathbf{x}}} \mathrm{d}\mathbf{x}\\
        % =& \frac{1}{L}\sum_{\ell=1}^{L}\int p\left(\mathbf{x}\right)\log\frac{\sum_{\ell^{\prime}}e^{\frac{-K}{2}(\mathbf{x}+\bar{\mathbf{P}}\bphi_{\ell,\ell^{\prime}})^{\top}\mathbf{C}^{-1}(\mathbf{x}+\bar{\mathbf{P}}\bphi_{\ell,\ell^{\prime}})}}{e^{\frac{-(a-2)K}{2a}\mathbf{x}^{\top}\mathbf{C}^{-1}\mathbf{x}}} \mathrm{d}\mathbf{x}+ \frac{M}{a}\log e\\
        \leq& \frac{1}{L}\sum_{\ell=1}^{L}\log \left[1+\sum_{\ell^{\prime}\neq \ell}\exp\left(-\frac{1}{a+2}G_{\ell,\ell^{\prime}}\right)\right] \\
        &+\frac{M}{a}\log e-\frac{M}{2}\log (1+2/a),
    \end{aligned}
\end{equation*}
where $a$ is a positive constant and the inequality is obtained by using the log-concavity and Jensen's inequality. Then, let $a=cM$ with $c> 0$, there is $\frac{M}{a}\log e-\frac{M}{2}\log (1+2/a) \leq \log \frac{e^{\frac{1}{c}}}{1+\frac{2}{cM}\frac{M}{2}}=\log \frac{e^{\frac{1}{c}}}{1+\frac{1}{c}}$, where the second step is based on the well-known Bernoulli's inequality. This gives the upper bound of $H$.

\subsection{Proof of Proposition~\ref{Prop:surrogate}}\label{Apdx:relaxed_up}
Let $\{D_{\ell,\ell^{\prime}}\}$ be aggregated into a $L (L-1)$-dimensional vector $\mathbf{u}$ and define the function $U_{\kappa}(K,\mathbf{u}) = \frac{1}{L}\sum_{\ell=1}^{L}\log\left[1+\sum_{\ell^{\prime}\neq\ell}\exp\left(-\kappa D_{\ell,\ell^{\prime}}K\right)\right]$ for simplicity. Using the first-order approximation, $U_{\kappa}(K,\mathbf{u})$ can be rewritten as $U_{\kappa}(K,\mathbf{u})= U_{\kappa}(K,\mathbf{c}) + (\mathbf{u}-\mathbf{c})^{\top}\nabla_{\mathbf{u}}U_{\kappa}(K,\mathbf{u})|_{\mathbf{u}=\mathbf{c}} + \mathcal{O}\left((\mathbf{u}-\mathbf{c})^{\top}\mathcal{H}_{\mathbf{u}}U_{\kappa}(K,\mathbf{u})|_{\mathbf{u}=\mathbf{c}}(\mathbf{u}-\mathbf{c})\right)$, where $\mathbf{c}$ denotes an arbitrary constant vector, $\nabla_{\mathbf{u}}U_{\kappa}(K,\mathbf{u})$ and $\mathcal{H}_{\mathbf{u}}U_{\kappa}(K,\mathbf{u})$ denote the  derivative and Hessian of $U_{\kappa}(K,\mathbf{u})$ w.r.t. $\mathbf{u}$, respectively. Then, the gradient values of $U_{\kappa}(K,\mathbf{u})$ are expressed as
\begin{equation*}
        \frac{\partial U_{\kappa}(K,\mathbf{u})}{\partial  D_{\ell,\ell^{\prime}}} = -\frac{\kappa K}{L}\frac{\exp\left(-\kappa D_{\ell,\ell^{\prime}}K\right)}{1+\sum_{l\neq \ell}\exp\left(-\kappa D_{\ell,l} K\right)}.
\end{equation*}
Hence, let the constant vector $\mathbf{c}$ be $\mathbf{c} =  \bar{D}\mathbf{1}$ with $ \bar{D}$ denoting the average version of $\{D_{\ell,\ell^{\prime}}\}$ as stated before. There is then $\frac{\partial U_{\kappa}(K,\mathbf{u})}{\partial  D_{\ell,\ell^{\prime}}}|_{ D_{\ell,\ell^{\prime}} =  \bar{D}}=-\frac{\kappa K}{L}\frac{\exp\left(-\kappa \bar{D}K\right)}{1+\sum_{l\neq \ell}\exp\left(-\kappa \bar{D}K\right)}$, based on which the residual term can be further written as $\mathcal{O}\left((\mathbf{u}-\mathbf{c})^{\top}\mathcal{H}_{\mathbf{u}}U_{\kappa}(K,\mathbf{u})|_{\mathbf{u}=\mathbf{c}}(\mathbf{u}-\mathbf{c})\right) = \mathcal{O}(\frac{1}{L(L-1)}\sum_{\ell}\sum_{\ell\neq\ell^{\prime}}(D_{\ell,\ell^{\prime}}-\bar{D})^2)$.
It then follows that $U_{\kappa}(K,\mathbf{u})$ can be first-order approximated as
\begin{equation*}
    \begin{aligned}
        U_{\kappa}(K,\mathbf{u})=& U_{\kappa}(K,\bar{D}\mathbf{1}) + \underbrace{(\mathbf{u}- \bar{D}\mathbf{1})^{\top}\nabla_{\mathbf{u}}U_{\kappa}(K,\mathbf{u})|_{\mathbf{u}=\bar{D}\mathbf{1}}}_{=0} \\
        &+ \mathcal{O}(\frac{1}{L(L-1)}\sum_{\ell}\sum_{\ell\neq\ell^{\prime}}(D_{\ell,\ell^{\prime}}-\bar{D})^2).
    \end{aligned}
\end{equation*}
Based on the definition, $U_{\kappa}(K,\bar{D}\mathbf{1})$ can be expressed as $U_{\kappa}(K,\bar{D}\mathbf{1}) = \frac{1}{L}\sum_{\ell=1}^{L}\log\left[1+\sum_{\ell^{\prime}\neq \ell}\exp\left(-\kappa \bar{D}K\right)\right]$ that gives the final result.

\subsection{Proof of Lemma~\ref{Lemma:aircomp_feature}}\label{Apdx:aircomp_feature}
Using~\eqref{eq:local_feature} and~\eqref{eq:post_process}, the efficient aggregated feature map can be rewritten as
\begin{equation*}
    \begin{aligned}
        \tilde{\mathbf{f}}&=\bar{\mathbf{P}}\mathbf{g} + \frac{1}{K}\sum_k\mathbf{w}_k+ \tilde{\mathbf{z}},
    \end{aligned}
\end{equation*}
where the noise term $\tilde{\mathbf{z}}=\frac{1}{K}\sum_k\mathbf{w}_k+ \frac{1}{K}\Re\{\mathbf{Z}^H\mathbf{b}\}$ is a summation of real independent Gaussian vectors and thus has a Gaussian distribution. Since both $\mathbf{w}_k$ and $\mathbf{Z}^{\top}$ are zero-mean, there is $\mathsf{E}\left[\tilde{\mathbf{z}}\right] = \mathbf{0}$.
Furthermore, $\mathbf{w}_k$ and $\mathbf{Z}^{\top}$ are mutually independent, leading the covariance matrix of $\tilde{\mathbf{z}}$ computed as
\begin{equation*}
    \begin{aligned}
        \mathsf{E}\left[\tilde{\mathbf{z}}\tilde{\mathbf{z}}^{\top}\right] &= \frac{1}{K^2}\mathsf{E}\left[\sum_{k,j}\mathbf{w}_k\mathbf{w}_j^{\top}\right] + \frac{1}{K^2}\mathsf{E}\left[\Re\{\mathbf{Z}^H\mathbf{b}\}\Re\{\mathbf{b}^{\top}\mathbf{Z}^*\}\right]\\
        % &=\frac{1}{K}\mathbf{C} + \frac{1}{2K^2}(\Vert\Re\{\mathbf{b}\}\Vert^2 + \Vert\Im\{\mathbf{b}\}\Vert^2)\\
        &= \frac{1}{K}\mathbf{C} + \frac{\sigma^2}{2K^2}\Vert\mathbf{b}\Vert^2.
    \end{aligned}
\end{equation*}
Still, there is $p(\bar{\mathbf{P}}\mathbf{g}=\bar{\mathbf{P}}\bmu_{\ell}) = \frac{1}{L}, \ \forall \ell$, meaning that $\tilde{\mathbf{f}}^{\mathsf{re}}$ follows a GMM with the uniform priors and the $\ell$-th Gaussian component having the mean $\bar{\mathbf{P}}\bmu_{\ell}$ and covariance $\frac{1}{K}\mathbf{C} + \frac{\sigma^2\Vert\mathbf{b}\Vert^2}{2K^2}\mathbf{I}_M$.

\subsection{Proof of Lemma~\ref{Lemma:monotonical}}\label{Apdx:monotonical}
Straightforwardly, $H_{\mathsf{s}}(\gamma_{\mathsf{air}})$ is monotonically decreasing w.r.t. $\tilde{D}(\gamma_{\mathsf{air}})$. Hence, we prove the Lemma~\ref{Lemma:monotonical} by showing that the derivative of $\tilde{D}(\gamma_{\mathsf{air}})$ w.r.t. $\gamma_{\mathsf{air}}$ is always positive given any $\mathbf{b}$. To this end, the chain rule of matrix derivative is used as
\begin{equation*}
    \frac{\partial \tilde{D}(\gamma_{\mathsf{air}})}{\partial \gamma_{\mathsf{air}}} = \mathsf{Tr}\left(\frac{\partial \tilde{D}(\gamma_{\mathsf{air}})}{\partial \mathbf{A}}\cdot\left(\frac{\partial \mathbf{A}}{\partial \gamma_{\mathsf{air}}}\right)^{\top}\right),
\end{equation*}
where the matrix $\mathbf{A} = \left(\mathbf{C} + \frac{K}{\gamma_{\mathsf{air}}}\mathbf{I}_M\right)^{-1}$ for ease of notation. Therein, using the trace property, $\frac{\partial \tilde{D}(\gamma_{\mathsf{air}})}{\partial \mathbf{A}}$ is computed as $\frac{\partial \tilde{D}(\gamma_{\mathsf{air}})}{\partial \mathbf{A}} = \bar{\mathbf{P}}\mathbf{D}\bar{\mathbf{P}}$. The other derivative term, $\frac{\partial \mathbf{A}}{\partial \gamma_{\mathsf{air}}}$, is computed by using the matrix inverse' derivative as $        \frac{\partial \mathbf{A}}{\partial \gamma_{\mathsf{air}}} = -\mathbf{A}\frac{\partial \mathbf{A}^{-1}}{\partial \gamma_{\mathsf{air}}}\mathbf{A} = \frac{K}{\gamma_{\mathsf{air}}^2}\mathbf{A}^2$. As a result, we have the derivative $ \frac{\partial \tilde{D}(\gamma_{\mathsf{air}})}{\partial \gamma_{\mathsf{air}}}= \frac{K}{\gamma_{\mathsf{air}}^2}\mathsf{Tr}\left(\bar{\mathbf{P}}\mathbf{D}\bar{\mathbf{P}}\mathbf{A}^2\right) = \frac{K}{\gamma_{\mathsf{air}}^2}\mathsf{Tr}\left(\mathbf{A}\bar{\mathbf{P}}\mathbf{D}\bar{\mathbf{P}}\mathbf{A}\right)$, where the argument of the trace function is a semi-positive Hermitian matrix, leading to $ \frac{\partial \tilde{D}(\gamma_{\mathsf{air}})}{\partial \gamma_{\mathsf{air}}} \geq 0$. This completes the proof.

\subsection{Proof of Lemma~\ref{Lemma:asymp_dist}}\label{Apdx:asymp_dist}
To obtain the distribution of $\zeta_{\mathsf{air}}$, we first rewrite the expression of the data vector $\mathbf{s}$ in~\eqref{eq:post_process} to rewrite $ \Vert\mathbf{b}^{\star}\Vert$. Specifically, since $\mathbf{b}^{\star} = \Vert\mathbf{b}^{\star}\Vert\mathbf{v}$ with $\mathbf{v}$ being the first eigenvector of the channel matrix $\mathbf{H}$, there is
\begin{equation*}
\begin{aligned}
    \mathbf{s} &= \mathbf{Y}^H\mathbf{b}^{\star} \\
    % &=\left[\mathbf{Z}^H +\mathbf{X}\mathsf{diag}(\rho_1^*,\cdots,\rho_K^*)\mathbf{H}^H\right]\mathbf{b}^{\star}\\
    &=\mathbf{Z}^H\mathbf{b}^{\star} + \mathbf{X}\mathsf{diag}(\rho_1^*,\cdots,\rho_K^*)\mathbf{q}_1\sqrt{\lambda_1}\Vert\mathbf{b}^{\star}\Vert,
\end{aligned}
\end{equation*}
where $\mathbf{X}=[\mathbf{x}_1^H,\cdots,\mathbf{x}_K^H]$, $\mathbf{q}_1$ and $\lambda_1$ denotes the first eigenvector and the first eigenvalue of $\mathbf{H}^H\mathbf{H}$. Therefore, to realize AirComp, the transmit power control of sensor $k$ can be rewritten as $\rho_k = \frac{1}{q_{k,1}^*\sqrt{\lambda_1}\Vert\mathbf{b}^{\star}\Vert}$,
where $q_{k,1}$ denotes the $k$-th element of $\mathbf{q}_1$. To minimize $\Vert\mathbf{b}^{\star}\Vert$ under the power constraint $\rho_k^2\leq \frac{P}{\nu^2}$, $\forall k$, there is $\Vert\mathbf{b}^{\star}\Vert = \max_k\ \frac{\nu^2}{P}\frac{1}{q_{k,1}^2\lambda_1} = \frac{\nu^2}{P}\frac{1}{\lambda_1\min_k\ q_{k,1}^2}$. Comparing the above result, the scaled effective channel gain can be re-expressed as $\zeta_{\mathsf{air}} =K\lambda_1\min_k\ q_{k,1}^2$.
Hence, $\zeta_{\mathsf{air}}$ can be characterized by investigating the asymptotic values of $\lambda_1$ and $\min_k\ q_{k,1}^2$. Specifically, using the asymptotic spectrum of random matrices~\cite{Edelman1989}, given $N=\omega K$, there is
\begin{equation*}
    \frac{1}{K}\lambda_1\rightarrow (1+\sqrt{\omega})^2,\ K\rightarrow \infty.
\end{equation*}
At the same time, since $\mathbf{H}$ is composed of i.i.d. Gaussian elements, the eigenvector $\mathbf{q}_1$ is isotropically distributed on a $K$-dimensional complex hypersphere and independent of the eigenvalue $\lambda_1$. Furthermore, it has the same distribution as a normalized complex Gaussian vector $\mathbf{h} = [h_1,\cdots,h_K]$ with $h_k\sim\mathcal{CN}(0,1)$, say $\mathbf{q}_1\overset{d}{=} \frac{\mathbf{h}}{\Vert\mathbf{h}\Vert}$. Therefore, we have 
\begin{equation*}
    K\cdot\min_k\ q_{k,1}^2\overset{d}{=}\frac{K}{\Vert\mathbf{h}\Vert^2}\min_k\ h_k^2,
\end{equation*}
where $\frac{K}{\Vert\mathbf{h}\Vert^2} = 1$ as $K\rightarrow \infty$, which means that $K\cdot\min_k\ q_{k,1}^2$ asymptotically equals to $\min_k\ h_k^2$. Clearly, $2h_k^2$ is a chi-square random variable with the degrees of freedom $2$ and is also exponentially distributed with the parameter of $\frac{1}{2}$, say $h_k^2\sim \mathrm{Exp}(\frac{1}{2})$. Using the order statistics of exponential distributions, we have $\min_k\ 2h_k^2\sim\mathrm{Exp}(\frac{K}{2})$. Concluding the above results gives the asymptotic distribution of $\zeta_{\mathsf{air}} \rightarrow \frac{1}{2}K(1+\sqrt{\omega})^2\zeta^{\prime}$ with $\zeta^{\prime}\sim\mathrm{Exp}\left(\frac{K}{2}\right)$. Finally, using the scaling property of exponential distributions obtains the distribution of $\zeta_{\mathsf{air}}$. 

\subsection{Proof of Lemma~\ref{Lemma:independence}}\label{Apdx:independence}
To obtain the distribution of the norm of the $k$-th ZF receiver, let $\mathbf{U}_k$ be the $(K-1)$-dimensional principal eigenspace of the matrix $[\mathbf{h}_1,\cdots,\mathbf{h}_{k-1},\mathbf{h}_{k+1},\cdots,\mathbf{h}_{K}]$ and $\Vert\mathbf{b}_k\Vert^2$ can be rewritten as $ \Vert\mathbf{b}_k\Vert^2 = \mathbf{e}_k^H(\mathbf{H}^H\mathbf{H})^{-1}\mathbf{e}_k=\frac{1}{\mathbf{h}_k\mathbf{U}_k^{\perp}(\mathbf{U}_k^{\perp})^H\mathbf{h}_k}$,
% \begin{equation*}
%     \begin{aligned}
%     \Vert\mathbf{b}_k\Vert^2 &= \mathbf{e}_k^H(\mathbf{H}^H\mathbf{H})^{-1}\mathbf{e}_k\\
%     % &=\frac{\mathbf{h}_k^H(\mathbf{I}-\mathbf{U}_k\mathbf{U}_k^H)\mathbf{h}_k}{[\mathbf{h}_k^H(\mathbf{I}-\mathbf{U}_k\mathbf{U}_k^H)\mathbf{h}_k]^2}\\
%     & = \frac{1}{\mathbf{h}_k\mathbf{U}_k^{\perp}(\mathbf{U}_k^{\perp})^H\mathbf{h}_k},
%     \end{aligned}
% \end{equation*}
where $\mathbf{U}_k^{\perp}$ denotes the orthogonal compliment of $\mathbf{U}_k$ and $(\mathbf{U}_k^{\perp})^H\mathbf{U}_k^{\perp} = \mathbf{I}_{N-K+1}$. Since $\mathbf{h}_k$ is composed of i.i.d. $\mathcal{CN}(0,1)$ elements, it is isotropically distributed on the $N$-dimensional complex unit hypersphere, i.e., $p(\mathbf{h}_k) = p(\mathbf{Q}\mathbf{h}_k)$ for any $\mathbf{Q}$ satisfying $\mathbf{Q}^H\mathbf{Q} = \mathbf{Q}\mathbf{Q}^H=\mathbf{I}_N$. Hence, let $\mathbf{Q} = [\mathbf{U}_k, \mathbf{U}_k^{\perp}]$ and
\begin{equation*}
    \begin{aligned}
        \mathbf{h}_k\mathbf{U}_k^{\perp}(\mathbf{U}_k^{\perp})^H\mathbf{h}_k \overset{d}{=} \mathbf{h}_k^H\mathbf{Q}^H\mathbf{U}_k^{\perp}(\mathbf{U}_k^{\perp})^H\mathbf{Q}\mathbf{h}_k=\sum_{k=K}^Nh_{k,n}^2.
    \end{aligned}
\end{equation*}
Since $h_{k,n}\sim\mathcal{CN}(0,1)$, the summation $2\sum_{k=K}^Nh_{k,n}^2$ follows a chi-square distribution with $2(N-K+1)$ degrees of freedom, which completes the proof.
% \end{singlespace}
\bibliographystyle{IEEEtran}

\bibliography{Ref}

\end{document}